\def\der#1{\frac{\partial}{\partial #1}}  
\def\Ci{\operatorname{Ci}}
\def\Si{\operatorname{Si}}
\def\Ei{\operatorname{Ei}}
\def\E#1{E_{#1}}
\def\e{\mathrm{e}}
\def\i{\mathrm{i}}
\def\d{\mathrm{d}}
\let\set\mathbb
\def\subsec{\subsection}
\newcounter{mmacnt}
\def\restartmma{\setcounter{mmacnt}{0}}
\newenvironment{mma}{
 \par\smallskip
 \catcode`|=\active
 \parskip=0pt\parindent=0pt 
 \small
 \def\In##1\\{%
   \def\linebreak{\hfill\break\null\qquad}%
   \refstepcounter{mmacnt}
   \hangindent=2.5em\hangafter=0
   \leavevmode
   \llap{\tiny\sffamily In[\arabic{mmacnt}]:=\kern.5em}%
   \mathversion{bold}$\displaystyle##1$
   \mathversion{normal}\par
 }%
 \def\Print##1\\{%
   \def\linebreak{\hfill\break}%
   \hangindent=2.5em\hangafter=0
   \leavevmode {\scriptsize##1}\par}%
 \def\Out##1\\{%
   \def\linebreak{$\hfill\break\null\hfill$}%
   \kern\abovedisplayskip\par
   \hangindent=2.5em\hangafter=0
   \leavevmode
   \llap{\tiny\sffamily Out[\arabic{mmacnt}]=\kern.5em}
   $\displaystyle##1$\hfill\null\par
   \kern\belowdisplayskip
 }%
 \def\Warning##1##2\\{%
   \def\linebreak{\hfill\break}%
   \hangindent=2.5em\hangafter=0
   \leavevmode
   {\scriptsize##1 : ##2}\par}%
}{%
 \par\smallskip
}
\newcommand{\myIn}[1]{{\small\sffamily In[#1]}}
\newcommand{\myOut}[1]{{\small\sffamily Out[#1]}}
\def\MLabel#1{{\refstepcounter{mmacnt}\label{#1}}\addtocounter{mmacnt}{-1}}
\begin{document}

\title*{Computer-Assisted Proofs of~Some~Identities for~Bessel~Functions
  of~Fractional~Order}

\titlerunning{Computer-Assisted Proofs of Some Identities for Bessel
Functions}

\author{
  Stefan Gerhold
  \and
  Manuel Kauers
  \and
  Christoph Koutschan
  \and
  Peter Paule
  \and
  Carsten Schneider
  \and
  Burkhard Zimmermann\\[1cm]
	In memory of Frank W.J. Olver (1924-2013)
}

\authorrunning{S. Gerhold, M. Kauers, C. Koutschan, P. Paule, C. Schneider, B. Zimmermann}

\institute{
Manuel Kauers, Peter Paule, Carsten Schneider, Burkhard Zimmermann \at
Research Institute for Symbolic Computation (RISC), Johannes Kepler University, Linz, Austria\\
\email{FirstName.LastName@risc.jku.at} 
\and
Stefan Gerhold \at
Financial and Actuarial Mathematics, Vienna University of Technology, Vienna, Austria\\
\email{sgerhold@fam.tuwien.ac.at}
\and
Christoph Koutschan \at
Johann Radon Institute for Computational and Applied Mathematics (RICAM),
Austrian Academy of Sciences (\"OAW), Linz, Austria\\
\email{christoph.koutschan@ricam.oeaw.ac.at}
\bigskip\\
This article appeared in:\\
C. Schneider and J. Bl\"umlein (eds.), \emph{Computer Algebra in Quantum Field Theory},\\
Texts \& Monographs in Symbolic Computation, DOI 10.1007/978-3-7091-1616-6\_3,\\
Springer-Verlag Wien 2013
}

\maketitle

%
%

\abstract{We employ computer algebra algorithms to prove a collection of
  identities involving Bessel functions with half-integer orders and other
  special functions. These identities appear in the famous Handbook of
  Mathematical Functions, as well as in its successor, the DLMF, but their
  proofs were lost. We use generating functions and symbolic summation
  techniques to produce new proofs for them.}

\section{Introduction}


The Digital Library of Mathematical Functions 
\cite{Olver:2012} is the successor of the classical
Handbook of Mathematical Functions \cite{AbSt73} by Abramowitz and Stegun.
Beginning of June 2005 Peter Paule was supposed to meet Frank Olver, the mathematics 
editor of the \cite{Olver:2012}, at the NIST headquarters in Gaithersburg (Maryland, USA). 
On May 18, 2005, Olver sent the following email to Paule:

``The writing of DLMF Chapter~BS\footnote[1]{finally Chapter 10 Bessel Functions} by
Leonard Maximon and myself is now 
largely complete [...] However, a 
problem has arisen in connection
with about a dozen formulas from
Chapter 10 of Abramowitz and Stegun
for which we have not yet tracked
down proofs, and the author of this 
chapter, Henry Antosiewiecz, died about 
a year ago. Since it is the editorial
policy for the DLMF not to state
formulas without indications of 
proofs, I am hoping that you will be 
willing to step into the breach and 
supply verifications by computer 
algebra methods [...] I will fax you 
the formulas later today."

In view of the upcoming trip to NIST, Paule was hoping to be able to 
provide at least some help in this matter. But the arrival of Olver's
fax chilled the enthusiasm quite a bit. Despite containing some identities 
with familiar pattern, the majority of the entries involved Bessel functions
of fractional order or with derivatives applied with respect to the order.

Let us now display the bunch of formulas we are talking about. Here,
$J_\nu(z)$ and $Y_\nu(z)$ denote the Bessel functions of the first and second kind, respectively,
$I_\nu(z)$ and $K_\nu(z)$ the modified Bessel functions,
$j_n(z)$ and $y_n(z)$ the spherical Bessel functions, $P_n(z)$
the Legendre polynomials, and $\Si(z)$ and $\Ci(z)$
the sine and cosine integral, respectively.
Unless otherwise specified, all parameters are arbitrary complex numbers.

\allowdisplaybreaks
\begin{alignat}3
 \tag{10.1.39}
 \frac1z\sin\sqrt{z^2+2zt}&=\sum_{n=0}^\infty\frac{(-t)^n}{n!}y_{n-1}(z)
 &&\hspace*{-1cm}(2|t|<|z|, |\Im(z)|\leq \Re(z))
 \\
 \tag{10.1.40}
 \frac1z\cos\sqrt{z^2-2zt}&=\sum_{n=0}^\infty\frac{t^n}{n!}j_{n-1}(z)
 && z\neq0
 \\
 \tag{10.1.41}
 \left[\der\nu j_\nu(z)\right]_{\nu=0}
 &=\frac1z(\Ci(2z)\sin z-\Si(2z)\cos z) && (z\in\set{C}\setminus {]{-\infty},0]})
 \\
 \tag{10.1.42}
 \left[\der\nu j_\nu(z)\right]_{\nu=-1}
 &=\frac1z(\Ci(2z)\cos z+\Si(2z)\sin z) && (z\in\set{C}\setminus {]{-\infty},0]})
 \\
 \tag{10.1.43}
 \left[\der\nu y_\nu(z)\right]_{\nu=0}
 &=\frac1z(\Ci(2z)\cos z+[\Si(2z)-\pi]\sin z) && (z\in\set{C}\setminus {]{-\infty},0]})
 \\
 \tag{10.1.44}
 \left[\der\nu y_\nu(z)\right]_{\nu=-1}
 &=-\frac1z(\Ci(2z)\sin z-[\Si(2z)-\pi]\cos z) && (z\in\set{C}\setminus {]{-\infty},0]})
 \\
 \tag{10.1.48}
 J_0(z\sin\theta)&=\sum_{n=0}^\infty(4n+1)\frac{(2n)!}{2^{2n}n!^2}j_{2n}(z)P_{2n}(\cos\theta)
 \\
 \tag{10.1.49}
 j_n(2z)&=-n!z^{n+1}\sum_{k=0}^n\frac{2n-2k+1}{k!(2n-k+1)!}j_{n-k}(z)y_{n-k}(z) && \;\;\;(n=0,1,2,\dots)
 \\
 \tag{10.1.52}
 \sum_{n=0}^\infty j_n^2(z)&=\frac{\Si(2z)}{2z}
 \\
 \tag{10.2.30}
 \frac1z\sinh\sqrt{z^2-2\i zt}&=\sum_{n=0}^\infty\frac{(-\i t)^n}{n!}
  \sqrt{\tfrac12\pi/z}I_{-n+\frac12}(z)
  &&\hspace*{-1cm}(2|t|<|z|, |\Im(z)|\leq \Re(z))
 \\
 \tag{10.2.31}
 \frac1z\cosh\sqrt{z^2+2\i zt}&=\sum_{n=0}^\infty\frac{(\i t)^n}{n!}
  \sqrt{\tfrac12\pi/z}I_{n-\frac12}(z) && z\neq0
 \\
 \tag{10.2.32}
 \left[\der\nu I_\nu(z)\right]_{\nu=1/2}
 &=-\frac{1}{\sqrt{2\pi z}}(\Ei(2z)\e^{-z} + \mathrm{E}_1(2z)\e^z) && (z\in\set{C}\setminus {]{-\infty},0]})
 \\
 \tag{10.2.33}
 \left[\der\nu I_\nu(z)\right]_{\nu=-1/2}
 &=\frac{1}{\sqrt{2\pi z}}(\Ei(2z)\e^{-z} - \mathrm{E}_1(2z)\e^z) && (z\in\set{C}\setminus {]{-\infty},0]})
 \\
 \tag{10.2.34}
 \left[\der\nu K_\nu(z)\right]_{\nu=\pm1/2}
 &=\pm\sqrt{\frac{\pi}{2z}} \mathrm{E}_1(2z)\e^z && (z\in\set{C}\setminus {]{-\infty},0]})
\end{alignat}

The numbering follows that in Abramowitz and Stegun~\cite{AbSt73}, and Olver remarked
on the fax:
``Irene Stegun left a record (without proofs) that (10.1.41)-(10.1.44) have errors:
the factor $\frac{1}{2} \pi$ should not be there, and (10.1.44) also has the wrong sign.
Equations (10.2.32)-(10.2.34) have similar errors. Their correct versions are given 
by [...]''.

In view of these unfamiliar objects and of the approaching trip to NIST, Paule
asked his young collaborators for help. Within two weeks, all identities succumbed 
to the members of the algorithmic combinatorics group of RISC. Moreover, in addition to 
the ~\cite{AbSt73} typos mentioned by Olver, further typos in (10.1.39) and 
(10.2.30) were found. Above we have listed the corrected versions of the formulas,
and when we use the numbering from~\cite{AbSt73},  we refer to the
corrected versions of the formulas here and throughout the paper.

At this place we want to relate the~\cite{AbSt73} numbering 
to the one used in the~\cite{Olver:2012}:
(10.1.39) and (10.1.40) are DLMF entries 10.56.2 and 10.56.1, respectively.
With the help of the rewriting rule DLMF 10.47.3, (10.1.41) and (10.1.42) are
DLMF entries 10.15.6 and 10.15.7, respectively; using
the rule DLMF 10.47.4, (10.1.43) and (10.1.44) are
DLMF entries 10.15.8 and 10.15.9, respectively.
Entry (10.1.48) is DLMF 10.60.10, (10.1.49) is DLMF 10.60.4, and 
(10.1.52) is DLMF 10.60.11.
With the help of DLMF 10.47.8, entry (10.2.30) 
turns into DLMF 10.56.4; and with the help of DLMF 10.46.7,
entry (10.2.31) turns into DLMF 10.56.3.
Formulas (10.2.32) and (10.2.33) are bundled in DLMF entry 10.38.6;
formula (10.2.34) is DLMF 10.38.7.

The goal of our exposition is
to convince the reader that only a very limited amount
of techniques has to be mastered to be able to prove such special function
identities with computer algebra.

Our computer proofs are based on the algorithmic theory of holonomic functions
and sequences, and symbolic summation algorithms.
In the following two sections, we do purely algebraic manipulations;
where necessary, analytical justifications (convergence of series, etc.) are 
given in Section~\ref{se:anal}. 
In general, we rely on the following computer algebra toolbox; underlying
ideas are described in~\cite{KP11,K13}.

\medskip
\textit{Holonomic closure properties.} The packages {\sf\small gfun}~\cite{Salvy:94} (for Maple) and
{\sf\small GeneratingFunctions}~\cite{Mallinger:96} (for Mathematica) are
useful for the manipulation of functions~$f(x)$ that
satisfy linear ordinary differential equations (LODEs) with
polynomial coefficients, as well as for sequences~$f_n$ satisfying
linear recurrence equations (LOREs) with polynomial coefficients.
Such objects are called \emph{holonomic.} It can be shown that
whenever $f(x)$ and $g(x)$ (resp. $f_n$ and~$g_n$) are holonomic,
then so are $f(x)\cdot g(x)$ and $f(x)+g(x)$ (resp. $f_n\cdot g_n$
and $f_n+g_n$). Furthermore, if $f(x)=\sum_{n=0}^\infty f_nx^n$,
then $f(x)$ is holonomic 
if and only if $f_n$ is
holonomic as a sequence. The packages \textsf{\small gfun} and {\sf\small GeneratingFunctions}
provide procedures for ``executing closure properties,'' i.e.,
from given differential equations for $f(x)$ and $g(x)$ they can
compute differential equations for $f(x)\cdot g(x)$ and
$f(x)+g(x)$, and likewise for sequences. Also several further
closure properties can be executed in this sense, and there are
procedures for obtaining a recurrence equation for $f_n$ from a
differential equation for its generating function
$f(x)=\sum_{n=0}^\infty f_nx^n$, and vice versa.

\textit{Symbolic summation tools.} The package {\sf\small Zb}~\cite{Paule:95} (for Mathematica) and
the more general and powerful packages {\sf\small Mgfun}~\cite{Chyzak:00} (for Maple),
{\sf\small HolonomicFunctions}~\cite{Koutschan09} and
{\sf\small Sigma}~\cite{Schneider:05,Schneider:07} (both for Mathematica) provide algorithms
to compute for a given definite sum
$S(n,z)=\sum_{k=0}^nf(n,z,k)$ recurrences (in $n$) and/or
differential equations (in $z$). Here the essential assumption is
that the summand $f(n,z,k)$ satisfies certain types of recurrences or
differential equations; see Section~\ref{Sec:SymSum}.

Subsequently, we restrict our exposition to the Mathematica
packages {\sf\small GeneratingFunctions}, {\sf\small Zb}, {\sf\small HolonomicFunctions}, and {\sf\small Sigma}.
In the Appendix, for the reader's convenience we list all formulas from Abramowitz and 
Stegun~\cite{AbSt73} that we apply in our proofs.

As for applications of
differentiating Bessel functions w.r.t.\ order, we mention maximum
likelihood estimation for the generalized hyperbolic distribution, and
calculating moments of the Hartman-Watson distribution. Both
distributions have applications in mathematical
finance~\cite{Pr99,Gerhold:11}. Prause's PhD thesis~\cite{Pr99} in fact
cites formulas (9.6.42)--(9.6.46).

\section{Basic Manipulations of Power Series}\label{se:comp}

Let us now show how to apply these computer algebra tools for
proving identities.
The basic strategy is to determine algorithmically a
differential equation (LODE) or a recurrence (LORE) for both sides
of an identity and check initial conditions.

First we load the package {\sf\small GeneratingFunctions} in the computer
algebra system Mathematica.

\begin{mma}
\In << |GeneratingFunctions.m| \\
\Print GeneratingFunctions Package by Christian Mallinger --
\copyright\ RISC Linz\\
\end{mma}

\subsec{LODE and initial conditions for (10.1.39)}
 We show that both sides of the equation satisfy the same differential
 equation in~$t$, and then check a suitable number of initial values.

 First we compute a differential equation for the left hand side
 $\tfrac1z\sin\sqrt{z^2+2zt}$.
 We view this function as the composition of $\frac1z\sin(t)$
 with $\sqrt{z^2+2zt}$ and compute a differential equation for it from
 defining equations of the components, by using the command
 AlgebraicCompose. (The last argument specifies the function under
 consideration. This symbol is used both in input and output.)
 \begin{mma}
 \In |AlgebraicCompose|[
   f''[t]==-f[t],
   f[t]^2==z^2+2 z t,
   f[t]] \\
 \Out\label{out:3} z f[t] + f'[t] + (2t+z) f''[t] == 0 \\
 \end{mma}
 \noindent In order to obtain a differential equation for the right hand side,
 we first compute a recurrence equation for the coefficient sequence
 $c_n:=(-1)^n/n!\, y_{n-1}(z)$ from the recurrences of its factors (using
 (10.1.19)). (The coefficient-wise product of power series is called \emph{Hadamard product},
 which explains the name of the command REHadamard.)

 \begin{mma}
 \In |REHadamard|[c[n+1]==-c[n]/(n+1),
   c[n-1]+c[n+1]==(2(n-1)+1)/z\,c[n], c[n]]\\
 \Warning{CanRE::denom} Warning. The input equation will be multiplied
 by its denominator. \\
 \Out z c[n] + (1+n)(1+2n)c[n+1] + (1+n)(2+n) zc[n+2] == 0 \\
 \end{mma}
 
\noindent Then we convert the recurrence equation for $c_n$ into a differential
 equation for its generating function $\sum_{n=0}^\infty c_nt^n$,
 which is the right hand side.

 \begin{mma}
 \In |RE2DE|[\%, c[n], f[t]] \\
 \Out\label{out:4} z f[t] + f'[t] + (2t+z) f''[t] == 0 \\
 \end{mma}
\noindent This agrees with output~\ref{out:3}.
 To complete the proof, we need to check two initial values.
 \begin{mma}
 \In |Series|[1/z\,|Sin|[\sqrt{z^2+2zt}],\{t,0,1\}]\\
 \Out \frac{|Sin|[\sqrt{z^2}]}z
    + \frac{\sqrt{z^2}|Cos|[\sqrt{z^2}]}{z^2} t
    + 
    O[t]^2 \\
 \end{mma}
\noindent By (10.1.12) and (10.1.19), this agrees with the initial
 values of the right hand side for $z\in\set R_{\geq0}$. The extension to
 complex~$z$ will be discussed in Section~\ref{se:anal}.

Alternatively, we could have derived a differential equation only for the
right hand side and then check with Mathematica that the left hand side
satisfies this equation:
\begin{mma}
 \In |Out|[\ref{out:4}]\ /.\ f \to (1/z\,|Sin|[\sqrt{z^2+2 z \#}]\&)\\
 \Out |True| \\
\end{mma}

The proofs for (10.1.40), (10.2.30), and (10.2.31) follow the same scheme
as the proof above. Both variants of the proof work in each case.\\
In summary, the most systematic way is to compute a differential
equation for the difference of left hand side and right hand side, and then
check that an appropriate number of initial values are zero.



\subsec{Proof of (10.1.41)}
This time we will not derive an LODE, but instead a recurrence relation
for the Taylor coefficients of the difference of the left and the right hand side.
The term $\log(z/2)$ that occurs in the pertinent expansion~(9.1.64)
is not analytic at $z=0$, hence we
first treat that one ``by hand.'' (Working with Taylor series at $z=1$, say,
promises not much but additional complications.)
This will leave us with a rather complicated expression for a
holonomic formal power series, for which we have to prove that it is
zero. At this point, we will employ the {\sf\small GeneratingFunctions} package
for computing a recurrence equation for the coefficient sequence of
that series. Upon checking a suitable number of initial values,
zero equivalence is then established.

One might think that we would not even have to compute the
recurrences, since it is known a priori that the sum of two sequences
satisfying recurrences of order $r_1$ and $r_2$, respectively,
satisfies a recurrence of order at most $r_1+r_2$.
The same holds for products, with $r_1 r_2$ instead of $r_1+r_2$.
The catch is that the leading coefficient of the combined recurrence might have
roots in the positive integers. It is clear that in order to give an inductive proof
there must not be an integer root beyond the places where we check
initial values.

\begin{proposition}
 Identity (10.1.41) holds for $z\in\set{C}\setminus\set{R}_{\leq 0}$.
\end{proposition}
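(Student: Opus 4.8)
The plan is to reduce (10.1.41) to the statement that a single holonomic formal power series in $z$ vanishes identically, and then to dispatch that statement by the recurrence-plus-initial-values method already used above. First I would make the left-hand side explicit. Using the relation $j_\nu(z)=\sqrt{\pi/(2z)}\,J_{\nu+1/2}(z)$ together with the expansion (9.1.64) for $\partial_\nu J_\nu(z)$, the derivative $[\partial_\nu j_\nu(z)]_{\nu=0}$ expands as a power series in $z$ whose only non-analytic ingredient is an isolated $\log(z/2)$ factor multiplying an ordinary power series; this factor arises from differentiating $(z/2)^{\nu+1/2}$ with respect to $\nu$. Since $\sqrt{\pi/(2z)}\,J_{1/2}(z)=j_0(z)=\sin z/z$, the logarithmic part of the left-hand side is $\log(z/2)\,\sin z/z$.

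Next I would expand the right-hand side. Because $\Ci(2z)=\gamma+\log(2z)+\sum_{k\ge1}(-1)^k(2z)^{2k}/((2k)(2k)!)$ while $\Si$, $\sin$, and $\cos$ are entire, the right-hand side is likewise the sum of an analytic power series and a logarithmic term, the latter being $\log(2z)\,\sin z/z$. The key preliminary observation is that the two logarithmic contributions, $\log(z/2)\,\sin z/z$ and $\log(2z)\,\sin z/z$, carry the \emph{same} coefficient $(\log z)\,\sin z/z$, the discrepancy $\mp\log 2\cdot\sin z/z$ being an ordinary analytic series. As the text warns, $\log z$ is not analytic at $z=0$ and cannot be fed to the holonomic machinery, so I would cancel the matching $\log z$-terms by hand, leaving a difference $D(z)=\sum_{n\ge0}d_n z^n$ that is a genuine analytic holonomic power series.

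It then remains to prove $D(z)\equiv0$. Each surviving building block — the analytic part of the $\nu$-derivative series, $\Ci(2z)$, $\Si(2z)$, $\sin z$, $\cos z$ — is holonomic, so the closure properties executed by \textsf{\small GeneratingFunctions} produce a linear recurrence with polynomial coefficients for the coefficient sequence $d_n$. I would then verify that enough initial coefficients vanish and conclude $d_n=0$ for all $n$ by induction, whence $D(z)\equiv0$ and (10.1.41) follows on its domain.

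The main obstacle is precisely the phenomenon flagged in the discussion preceding the statement: the leading coefficient of the computed recurrence may have roots at positive integers, so checking initial values merely up to the recurrence order need not suffice. I would therefore factor that leading coefficient, locate its largest positive-integer root $N$, and verify $d_0=\cdots=d_N=0$ before launching the induction. The clean cancellation of the two $\log z$-terms in the earlier step is what guarantees that the sequence fed to the recurrence is the correct analytic object and that this finite check is conclusive.
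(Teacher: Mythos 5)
Your proposal takes essentially the same route as the paper's proof: expand the left side via $j_\nu(z)=\sqrt{\pi/(2z)}\,J_{\nu+1/2}(z)$ and (9.1.64), expand the right side via the series for $\Si$ and $\Ci$, cancel the matching $\log z\cdot\sin z/z$ terms by hand, and show the remaining analytic holonomic series vanishes by computing a recurrence for its coefficients with the closure-property machinery and checking initial values past any nonnegative-integer roots of the leading coefficient. The one small slip is in the counting at the end: if $N$ is the largest such root and $r$ the recurrence order you must verify the coefficients up to index $N+r$, not just up to $N$ (in the actual computation the order-$6$ recurrence has a leading coefficient with no nonnegative integer roots, and the paper checks the coefficients of $z^0,\dots,z^{10}$).
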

\begin{proof}
 First we consider the left hand side.
 Using (10.1.1) and (9.1.64) from the Appendix, we get
 \[
  \der\nu j_\nu(z)
  =j_\nu(z)\log\frac z2-\frac12\sqrt\pi\sum_{n=0}^\infty(-1)^n\frac{\psi(\nu+n+\tfrac32)}{\Gamma(\nu+n+\tfrac32)}\frac{(\tfrac14z^2)^n}{n!},
 \]
 where $\Gamma(x)$ and $\psi(x)=\frac{\frac{d}{dx}\Gamma(x)}{\Gamma(x)}$ denote the Gamma and  digamma function, respectively. Hence, with (10.1.11),
 \[
  \left[\der\nu j_\nu(z)\right]_{\nu=0}
  =\frac{\sin z }z\log\frac z2-\frac12\sqrt\pi\sum_{n=0}^\infty(-1)^n\frac{\psi(n+\tfrac32)}{\Gamma(n+\tfrac32)}\frac{(\tfrac14z^2)^n}{n!}.
 \]
 For the right hand side, we need (5.2.14), (5.2.16),
 and the Taylor expansions of $\sin z$ and~$\cos z$.
 We have to show that
 \allowdisplaybreaks
 \begin{alignat*}1
   &
   \left[\der\nu j_\nu(z)\right]_{\nu=0}
   - \bigl(\Ci(2z)\sin z-\Si(2z)\cos z\bigr)/z\notag\\
   ={}&\frac{\sin
   z}z\log\frac z2-\frac12\sqrt\pi\sum_{n=0}^\infty
    \frac{(-1)^n\psi(n+\tfrac32)}{\Gamma(n+\tfrac32)}\frac{(\tfrac14z^2)^n}{n!}
    +\frac{\Si(2z)\cos z}z
    \notag\\
    &\quad{}
   - \frac{\sin z}z\Bigl(\gamma+\log
   (2z)+\sum_{n=1}^\infty\frac{(-1)^n(2z)^{2n}}{2n(2n)!}\Bigr)\notag\\
   ={}&-\frac12\sqrt\pi\sum_{n=0}^\infty\frac{(-1/4)^n\psi(n+\tfrac32)z^{2n}}{\Gamma(n+\tfrac32)n!}
     +2\sum_{n=0}^\infty\frac{(-4)^nz^{2n}}{(2n+1)(2n+1)!}\sum_{n=0}^\infty\frac{(-1)^nz^{2n}}{(2n)!}
  \smash{\quad\raisebox{-6mm}{$\left.\rule{0pt}{12mm}\right\}\ (\ast)$}\kern-1em}\\
   &\quad{}-\bigl(\gamma+2\log2\bigr)\sum_{n=0}^\infty\frac{(-1)^nz^{2n}}{(2n+1)!}
   +4z^2\sum_{n=0}^\infty\frac{(-4)^nz^{2n}}{2(n+1)(2(n+1))!}\sum_{n=0}^\infty\frac{(-1)^nz^{2n}}{(2n+1)!}
 \end{alignat*}
 is identically zero, i.e., $c_n=0$ for all~$n\geq0$, where $c_n$ is
 defined as $(\ast)=\sum_{n=0}^\infty c_nz^{2n}$.

 To this end, we compute step by step a recurrence equation for $c_n$
 from the various coefficient sequences appearing in~$(\ast)$.
 We suppress some of the output, in order to save space.
Recurrences for most of the inner coefficient sequences are easy
 to obtain. For instance, for
 \begin{mma}
 \In f[n\_]:=\frac{(-4)^n}{(2n+1)(2n+1)!}\\
 \end{mma}
\noindent we have
 \begin{mma}
 \In |FullSimplify|[f[n+1]/f[n]]\\
 \Out \frac{-2(2n+1)}{(n+1)(2n+3)^2}\\
 \end{mma}
\noindent and hence the recurrence
 $f_{n+1}=\frac{-2(2n+1)}{(n+1)(2n+3)^2}f_n$.
 Only the series involving $\psi(n+\frac32)$ requires a bit more work.
 Here, we use the package {\sf\small GeneratingFunctions} to obtain a recurrence from the recurrence
 (6.3.5) for $\psi(n+\frac32)$ and the first order recurrence of $(-1/4)^n/\Gamma(n+\tfrac32)n!$.
 \begin{mma}
 \In |recSum| = |REHadamard|[f[n + 1] == f[n] + \frac1{n + 3/2},\linebreak
    f[n + 1] == \frac{-1}{2(2n + 3)(n + 1)}f[n], f[n]]; \\
 \end{mma}
\noindent Next, we compute recurrence equations for the coefficient sequence of
 the two series products in~$(\ast)$.
 \begin{mma}
 \In |recSiCos| =
    |RECauchy|[f[n + 1] == \frac{-2(2n+1)}{(n+1)(2n+3)^2}f[n],\linebreak
       f[n + 1] == \frac{-1}{2(2n+1)(n+1)}f[n], f[n]];\\
 \In |recCiSin| =
  |RECauchy|[f[n + 1] == \frac{-2(n + 1)}{(n + 2)^2(2n + 3)}f[n],\linebreak
    f[n + 1] == \frac{-1}{2(n+1)(2n+3)}f[n], f[n]];\\
 \end{mma}
\noindent The latter recurrence has to be shifted by~1, owing to the
 factor~$z^2$.
 \begin{mma}
 \In |recCiSin| = |recCiSin|\ /.\ f[n\_]\to f[n+1]\ /.\ n\to n-1;\\
 \end{mma}
\noindent The recurrences collected so far can now be combined to a recurrence
 for~$c_n$.
 \begin{mma}
 \In |rec1|=|REPlus|[|recSiCos|,|recSum|,f[n]];\\
 \In
 |rec2|=|REPlus|[|recCiSin|,f[n+1]==\frac{-1}{2(n+1)(2n+3)}f[n],f[n]];\\
 \In |rec|=|REPlus|[|rec1|,|rec2|,f[n]]\\
 \Out 5184( 227 + 60n ) f[n] + \cdots\linebreak
   \hbox to.66\hsize{\dotfill}\linebreak
   \dots
  +
   7600( 4 + n ) ( 5 + n ) {( 6 + n ) }^2
    ( 9 + 2n ) ( 11 + 2n ) {( 13 + 2n ) }^2
    ( 167 + 60n ) f[n+6] = 0\\
 \end{mma}
\noindent The precise shape of the recurrence is irrelevant, it only matters
 that it has order~$6$ and that the coefficient of $f[n+6]$ (i.e., of $c_{n+6}$) does not
 have roots at nonnegative integers.
 As this is the case, we can complete the proof by checking 
 that the coefficients of $z^0,\dots,z^{10}$ in~($\ast$) vanish,
 which can of course be done with Mathematica.
 %

Alternatively, a similar proof can be obtained more conveniently using
the package
\begin{mma}
\In << |HolonomicFunctions.m| \\
\Print HolonomicFunctions package by Christoph Koutschan, RISC-Linz, Version 1.6 (12.04.2012) \\
\end{mma}

\noindent One of the main features of this package is the {\rm Annihilator} command; it
analyzes the structure of a given expression and executes the necessary
closure properties automatically, in order to compute a system of differential
equations and/or recurrences for the expression. We apply it to~$(\ast)$:
\begin{mma}
\In |Annihilator|\bigg[
  - \frac{|Sin|[z]}{z} \left(|EulerGamma| + 2\,|Log|[2] + |Sum|\left[\frac{(-1)^n \, (2z)^{2n}}{2n \, (2n)!}, \{n, 1, \infty\}\right]\right)
  \linebreak
  - \frac{\sqrt{\pi}}{2} |Sum|\left[ \frac{(-1/4)^n \, z^{2n} \, |PolyGamma|[0,n+3/2]}{n! \, |Gamma|[n+3/2]}, \{n, 0, \infty\}\right]
  + \frac{|Cos|[z]}{z}|SinIntegral|[2z],
  \linebreak |Der|[z] \bigg]\\
\Out \big\{
  (48z^5+95z^3) D_{\!z}^8 +
  (864z^4+1900z^2) D_{\!z}^7 +
  (576z^5+5436z^3+10830z) D_{\!z}^6 + {} \linebreak
  (7968z^4+23684z^2+17100) D_{\!z}^5 +
  (1440z^5+32442z^3+77002z) D_{\!z}^4 + {} \linebreak
  (13344z^4+59332z^2+83448) D_{\!z}^3 +
  (1344z^5+33596z^3+82858z) D_{\!z}^2 + {} \linebreak
  (6240z^4+31404z^2+46892) D_{\!z} +
  (432z^5+6495z^3+15150z)
  \big\}\\
\end{mma}
\noindent Since the {\sf\small HolonomicFunctions} package uses operator notation, the second
argument indicates that a differential equation w.r.t.~$z$ is desired; instead
of an equation the corresponding operator is returned with
$D_{\!z}=\mathrm{d}/\mathrm{d}z$. As before, the proof is completed by
checking a few initial values (see also Section~\ref{se:anal}).\qed
\end{proof}

\section{Symbolic Summation Tools}\label{Sec:SymSum}

It is not always the case that recurrences for the power series coefficients
can be obtained by the package {\sf\small GeneratingFunctions}.
Sometimes combinatorial identities such as the following one are needed.
Its proof gives occasion to introduce the Mathematica package {\sf\small Zb},
an implementation of Zeilberger's algorithm for hypergeometric summation~\cite{Zeilberger91}.

\begin{lemma}\label{le:by zb}
  For $k\in\mathbb{Z}_{\geq 0}$ we have
  \[
    \sum_{j=1}^k \frac{(-2)^j}{j}\binom{k}{j} =
    \begin{cases}
       \mathrm{H}_{n+1} - 2\mathrm{H}_{2n+2} & k=2n+1\quad\text{is odd} \\
       \mathrm{H}_n - 2\mathrm{H}_{2n} & k=2n\quad\text{is even,}
    \end{cases}
  \]
  where $\mathrm{H}_n:=\sum_{k=1}^n\frac1k$ denotes the harmonic numbers.
\end{lemma}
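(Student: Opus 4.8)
The plan is to convert the alternating sum into an elementary integral and read off the closed form. Writing $S_k:=\sum_{j=1}^k\frac{(-2)^j}{j}\binom kj$ and using $\frac1j=\int_0^1 s^{j-1}\,\d s$, I would interchange the (finite) sum with the integral and collapse the binomial series:
\[
  S_k=\int_0^1\frac1s\sum_{j=1}^k\binom kj(-2s)^j\,\d s
     =\int_0^1\frac{(1-2s)^k-1}{s}\,\d s.
\]
The integrand is in fact a polynomial in $s$, since the numerator vanishes at $s=0$, so no convergence or interchange issue arises.

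Next I would substitute $u=1-2s$, turning the integral into $\int_1^{-1}\frac{u^k-1}{u-1}\,\d u$, and expand $\frac{u^k-1}{u-1}=\sum_{i=0}^{k-1}u^i$. Integrating term by term gives $\int_1^{-1}u^i\,\d u=\frac{(-1)^{i+1}-1}{i+1}$, which equals $-2/(i+1)$ when $i$ is even and $0$ when $i$ is odd. Hence only even exponents survive, and
\[
  S_k=-2\!\!\sum_{\substack{0\le i\le k-1\\ i\ \mathrm{even}}}\!\!\frac1{i+1}
      =-2\!\!\sum_{\substack{1\le \ell\le k\\ \ell\ \mathrm{odd}}}\!\!\frac1\ell
      =H_{\lfloor k/2\rfloor}-2H_k,
\]
where the last step uses $\sum_{\ell\le k,\ \ell\ \mathrm{even}}\frac1\ell=\tfrac12 H_{\lfloor k/2\rfloor}$ together with $H_k=\sum_{\ell\le k,\ \ell\ \mathrm{odd}}\frac1\ell+\sum_{\ell\le k,\ \ell\ \mathrm{even}}\frac1\ell$.

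Finally I would split into the two parities. For $k=2n$ the formula already reads $H_n-2H_{2n}$, matching the even case. For $k=2n+1$ it reads $H_n-2H_{2n+1}$, and the elementary identities $H_{n+1}=H_n+\frac1{n+1}$ and $H_{2n+2}=H_{2n+1}+\frac1{2(n+1)}$ give $H_{n+1}-2H_{2n+2}=H_n-2H_{2n+1}$, matching the odd case. The only genuinely delicate point is this parity bookkeeping: recognizing that the term-by-term integration kills the odd powers of $u$ and that the surviving reciprocals of odd numbers must be re-expressed through $H_{\lfloor k/2\rfloor}$ and $H_k$. An alternative, closer to the paper's own approach, would be to run creative telescoping on the proper hypergeometric summand $\frac{(-2)^j}{j}\binom kj$ to obtain a recurrence in $k$, and then verify that the claimed right-hand side satisfies it together with the initial values; there the main obstacle shifts to handling the harmonic-number right-hand side and keeping the even and odd branches consistent.
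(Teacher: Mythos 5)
Your argument is correct, and it takes a genuinely different route from the paper. You derive the closed form directly via the integral representation $\frac1j=\int_0^1 s^{j-1}\,\mathrm{d}s$, collapsing the binomial sum to $\int_0^1\bigl((1-2s)^k-1\bigr)s^{-1}\,\mathrm{d}s$ and then integrating the geometric-sum expansion term by term; every step checks out (the interchange is over a finite sum, the integrand is a polynomial, the parity bookkeeping $H_{n}-2H_{2n+1}=H_{n+1}-2H_{2n+2}$ is right, and the values agree with $k=0,1,2,3$). The paper instead feeds the summand to Zeilberger's algorithm (the \textsf{Zb} package), obtains second-order recurrences in $n$ for the subsequences $a_{2n+1}$ and $a_{2n}$ separately, verifies that the harmonic-number right-hand sides satisfy the same inhomogeneous recurrences, and checks initial values at $k=0,1,2,3$. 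Your proof buys elementarity and more: it is software-free, it \emph{discovers} the answer rather than verifying it, and it yields the uniform formula $S_k=H_{\lfloor k/2\rfloor}-2H_k$ from which both branches follow at once. What the paper's route buys is precisely its pedagogical point in that section: the identity serves as a worked example of automatic verification by creative telescoping, a method that scales to summands for which no elementary integral representation is available (and, as the authors note, \textsf{Sigma} can even compute the right-hand side from the left). Both proofs are complete; yours would be a legitimate, arguably cleaner, replacement if the goal were only to establish the lemma.
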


It can be a chore to locate such identities in the literature. The
closest match that the authors found is the similar identity
$\sum_{j=1}^k (-1)^{j+1}j^{-1} \binom{k}{j} = \mathrm{H}_k$~\cite[p.\ 281]{GrKnPa94}.
Thus, an automatic identity checker like the one we
describe now is helpful. We note in passing that we can not only
verify such identities, but even compute the right hand side from
the left hand side~\cite{Schneider:05}.

\begin{proof}[of Lemma~\ref{le:by zb}]
  We denote the sum on the
  left hand side by~$a_k$.
  Using the Mathematica package
  
  \begin{mma}
  \In << |Zb.m|\\
  \Print Fast Zeilberger Package by Peter Paule and Markus Schorn (enhanced by Axel Riese)
     -- \copyright\ RISC Linz\\
  \end{mma}
\noindent we find
  \begin{mma}     
  \In |Zb|[(-2)^j/j\,|Binomial|[2n+1,j],\{j,1,2n+1\},n]\\
  \Print If `1 + 2 n' is a natural number, then:\\
  \Out\label{out:10} \{(n+1)(2n+3)|SUM|[n]-(4n^2+14n+13)|SUM|[n+1]\linebreak
     {}+(n+2)(2n+5)|SUM|[n+2]==-2\}\\
  \In |Zb|[(-2)^j/j\,|Binomial|[2n,j],\{j,1,2n\},n]\\
  \Print If `2n' is a natural number, then:\\
  \Out\label{out:11} \{(n+1)(2n+1)|SUM|[n]-(4n^2+10n+7)|SUM|[n+1]\linebreak
     {}+(n+2)(2n+3)|SUM|[n+2]==-2\}\\
  \end{mma}
\noindent  hence the sequence $a_k$ satisfies the recurrences
  \[
    (n+1)(2n+3)a_{2n+1} - (4n^2+14n+13)a_{2n+3} + (n+2)(2n+5)a_{2n+5} = -2
  \]
  and
  \[
    (n+1)(2n+1)a_{2n} - (4n^2+10n+7)a_{2n+2} + (2n+3)(n+2)a_{2n+4} = -2.
  \]
  The right hand side satisfies these recurrences, too:
  \begin{mma}
  \In |Out|[\ref{out:10}]
    \ /.\ |SUM|[n\_] \to |HarmonicNumber|[n+1] - 2|HarmonicNumber|[2n+2]\linebreak
    \ //\ |ReleaseHold|\ //\ |FullSimplify|\\
  \Out \{|True|\}\\
  \In
   |Out|[\ref{out:11}]/.\ |SUM|[n\_] \to |HarmonicNumber|[n] - 2|HarmonicNumber|[2n]\linebreak
   \ //\ |ReleaseHold|\ //\ |FullSimplify|\\
  \Out \{|True|\} \\
  \end{mma}
\noindent  Hence the desired result follows by checking the initial conditions $k=0,1,2,3$.\qed
%
\end{proof}

\begin{proposition}\label{pr:10.2.32}
  Identities (10.2.32) and (10.2.33) follow from Lemma~\ref{le:by zb}.
  They hold for $z\in\set{C}\setminus\set{R}_{\leq 0}$.
\end{proposition}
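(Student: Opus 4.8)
The plan is to follow the template of the proof of~(10.1.41): expand both sides as generalized power series in~$z$, reduce the claim to a comparison of coefficients, and let Lemma~\ref{le:by zb} dispose of the combinatorial core. For the left-hand side of~(10.2.32) I would differentiate the defining series of $I_\nu(z)$ with respect to the order, which yields
\[
  \left[\der\nu I_\nu(z)\right]_{\nu=1/2}
  = I_{1/2}(z)\log\frac z2
  - \sum_{n=0}^\infty\frac{\psi(n+\tfrac32)}{n!\,\Gamma(n+\tfrac32)}\Bigl(\frac z2\Bigr)^{2n+1/2},
\]
and the analogous expression with $\psi(n+\tfrac12)$ and $I_{-1/2}$ for~(10.2.33). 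Using $I_{1/2}(z)=\sqrt{2/(\pi z)}\,\sinh z$ and $I_{-1/2}(z)=\sqrt{2/(\pi z)}\,\cosh z$, both sides share the prefactor $z^{-1/2}$, so after stripping it the task becomes an equality of ordinary power series.

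I would then expand the right-hand side by inserting the Appendix series for $\Ei$ and $\mathrm{E}_1$ and forming Cauchy products with $\e^{\pm z}$. Their ``analytic'' parts combine to $(\gamma+\log(2z))(\e^{-z}\mp\e^{z})$, that is $-2(\gamma+\log(2z))\sinh z$ for~(10.2.32) and $2(\gamma+\log(2z))\cosh z$ for~(10.2.33). The decisive point is the \emph{series} parts: the $z^k$-coefficients of $\Ei(2z)\e^{-z}$ and $\mathrm{E}_1(2z)\e^{z}$ are each a multiple of $\tfrac1{k!}\sum_{j=1}^k\frac{(-2)^j}{j}\binom kj$, and in the relevant combination $\Ei(2z)\e^{-z}\pm\mathrm{E}_1(2z)\e^{z}$ the sign factor $(-1)^k\mp1$ annihilates the even (resp.\ odd) powers, leaving only $k=2n+1$ for~(10.2.32) and only $k=2n$ for~(10.2.33). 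This is exactly the sum of Lemma~\ref{le:by zb}; applying its odd case to~(10.2.32) and its even case to~(10.2.33) replaces those coefficients by $\mathrm{H}_{n+1}-2\mathrm{H}_{2n+2}$ resp.\ $\mathrm{H}_n-2\mathrm{H}_{2n}$.

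It remains to match the harmonic numbers against the digamma coefficients on the left. Writing $\psi(n+\tfrac12)=-\gamma-2\log2+2\mathrm{H}_{2n}-\mathrm{H}_n$ (and the shifted version for $\psi(n+\tfrac32)$), I would split off the constant $-\gamma-2\log2$: summed against the Bessel series it rebuilds a multiple of $I_{\pm1/2}(z)$, whose constant merges with the $\log\frac z2$ coming from differentiation into precisely the $\gamma+\log(2z)$ appearing on the right, cancelling the logarithmic terms isolated above. The leftover digamma coefficient $2\mathrm{H}_{2n+2}-\mathrm{H}_{n+1}$ (resp.\ $2\mathrm{H}_{2n}-\mathrm{H}_n$) is the negative of the Lemma's value, so the series agree term by term, and what survives is an identity of purely numerical coefficients that follows from $n!\,\Gamma(n+\tfrac32)=\tfrac12\sqrt\pi\,4^{-n}(2n+1)!$ and reproduces the prefactor $1/\sqrt{2\pi z}$. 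Since Lemma~\ref{le:by zb} already resolves the genuinely combinatorial step, the main obstacle is the careful bookkeeping of the non-series logarithmic and constant contributions; as in the earlier proofs, the convergence of the series and the continuation from $z>0$ to $z\in\set{C}\setminus\set{R}_{\leq0}$ are deferred to Section~\ref{se:anal}.
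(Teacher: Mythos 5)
Your proposal is correct and follows essentially the same route as the paper's proof: Taylor expansion of both sides via (5.1.10)/(5.1.11) and (9.6.10)/(9.6.42), Cauchy products producing the coefficients $\frac{1}{k!}\sum_{j=1}^k\frac{(-2)^j}{j}\binom{k}{j}$ with the parity cancellation, the evaluation $\psi(n+\tfrac32)=-\gamma-2\log 2+2\mathrm{H}_{2n+2}-\mathrm{H}_{n+1}$ together with $n!\,\Gamma(n+\tfrac32)=\tfrac12\sqrt{\pi}\,4^{-n}(2n+1)!$, and Lemma~\ref{le:by zb} to close the coefficient comparison. The bookkeeping of the logarithmic and constant terms matches the paper's displays \eqref{eq:rhs 10.2.32} and \eqref{eq:lhs 10.2.32} exactly.
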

\begin{proof}
  We do Taylor series expansion on both sides of~(10.2.32), and then compare coefficients.
  Using the expansions~(5.1.10) and (5.1.11)
  and computing Cauchy products, we find that the right hand side of~(10.2.32) equals
  \begin{equation}\label{eq:rhs 10.2.32}
    \sqrt{\frac{2}{\pi z}}\left(\left( \log z + \log 2 + \gamma \right)\sinh z +
     \sum_{n=0}^\infty \frac{a_{2n+1}}{(2n+1)!}z^{2n+1} \right),
  \end{equation}
  where $a_k$ is the sum from Lemma~\ref{le:by zb}.
  The expansion of the left hand side of~(10.2.32) can be done with~(9.6.10) and~(9.6.42).
  Since
  \[
    \frac{(z^2/4)^n}{\Gamma(n+\tfrac32)n!} = \frac{2z^{2n}}{\sqrt{\pi}(2n+1)!}
  \]
  and
  \[
    \psi(n+\tfrac32) = -\gamma - 2\log 2 + 2\mathrm{H}_{2n+2} - \mathrm{H}_{n+1},
  \]
  the left hand side of~(10.2.32) turns out to be
  \begin{equation}\label{eq:lhs 10.2.32}
    \sqrt{\frac{2}{\pi z}}\left(\left( \log z + \log 2 + \gamma \right)\sinh z +
     \sum_{n=0}^\infty \left( \mathrm{H}_{n+1} - 2\mathrm{H}_{2n+2} \right) \frac{z^{2n+1}}{(2n+1)!} \right).
  \end{equation}
  Lemma~\ref{le:by zb} completes the coefficient comparison.
%

  Identity~(10.2.33)
  can be proved analogously; replace sinh by cosh and $2n+1$ by $2n$ in~\eqref{eq:rhs 10.2.32},
  and sinh by cosh and the summand by $(\mathrm{H}_n - 2\mathrm{H}_{2n})z^{2n}/(2n)!$ in~\eqref{eq:lhs 10.2.32}.\qed
\end{proof}

We proceed to prove the identities (10.1.48), (10.1.49),
and (10.1.52) by the same strategy as above: compute LODEs
or LOREs for both sides, and check initial values. Since in these
identities definite sums occur for which one cannot derive LOREs
or LODEs by using holonomic closure properties, symbolic summation algorithms enter the game. For hypergeometric sums, like in Lemma~\ref{le:by zb}, the package {\sf\small Zb} is the perfect choice. Since in the following identities the occurring sums do not have hypergeometric summands, we use more general summation methods~\cite{Schneider:05} and~\cite{Koutschan09} that are available in the packages \textsf{\small Sigma} and \textsf{\small HolonomicFunctions}, respectively.

In general, the sums under consideration are of the
form
\begin{equation}\label{SigmaInput}
S(n,z)=\sum_{k=0}^{\infty}h(n,k)f(n,z,k)
\end{equation}
with integer parameter $n$ and complex parameter $z$ where $h$ and
$f$ have the following properties: $h(n,k)$ is a hypergeometric
term in $n$ and $k$, i.e., $h(n+1,k)/h(n,k)$ and $h(n,k+1)/h(n,k)$
are rational functions in $n$ and $k$. Furthermore,
$f(n,z,k)$ satisfies a recurrence relation of the form
\begin{multline}\label{recK}
f(n,z,k+d)=\alpha_0(n,z,k)f(n,z,k)\\
+\alpha_1(n,z,k)f(n,z,k+1)+\dots+\alpha_{d-1}(n,z,k)f(n,z,k+d-1),
\end{multline}
and either a recurrence relation
\begin{multline}\label{recN}
f(n+1,z,k)=\beta_0(n,z,k)f(n,z,k)\\
+\beta_1(n,z,k)f(n,z,k+1)+\dots+\beta_{d-1}(n,z,k)f(n,z,k+d-1)
\end{multline}
or a differential equation
\begin{multline}\label{recZ}
\frac{\d}{ \d z}f(n,z,k)=\beta_0(n,z,k)f(n,z,k)\\
+\beta_1(n,z,k)f(n,z,k+1)+\dots+\beta_{d-1}(n,z,k)f(n,z,k+d-1),
\end{multline}
where the $\alpha_i,\beta_i$ are rational functions in $k$, $n$, and $z$.
From recurrences of the forms~\eqref{recK} and \eqref{recN} we will derive a recurrence relation in $n$
for $S(n,z)$. If, on the other hand, we have~\eqref{recZ} instead of~\eqref{recN},
we will compute a differential equation for $S(n,z)$ in~$z$.

We note that the \textsf{\small HolonomicFunctions} package allows more flexible recurrence/ differential systems as input specifying the shift/differential behavior of the summand accordingly. However, the input description given above gives rise to rather efficient algorithms implemented in the \textsf{\small Sigma} package to calculate LOREs and LODEs for $S(n,z)$.
 
\subsec{LORE and initial conditions for (10.1.49)}
We compute a LORE for the right hand side
\begin{align*}
S(n):=&\sum_{k=0}^n-n!z^{n+1}\frac{2n-2k+1}{k!(2n-k+1)!}j_{n-k}(z)y_{n-k}(z)\\
=&\sum_{k=0}^n\frac{-n!z^{n+1}(2k+1)}{(n-k)!(n+k+1)!}j_{k}(z)y_{k}(z)
\end{align*}
using

\begin{mma}
\In << |Sigma.m| \\
\Print Sigma - A summation package by Carsten Schneider
\copyright\ RISC-Linz\\
\end{mma}

\noindent First we insert the sum in the form~\eqref{SigmaInput}
with recurrences of the type~\eqref{recK} and~\eqref{recN}. Note
that $h(n,k)=\frac{-n!z^{n+1}(2k+1)}{(n-k)!(n+k+1)!}$ is
hypergeometric in $n$ and $k$. Moreover, by (10.1.19) the
spherical Bessel functions of the first kind $j(k):=j_k(z)$ (we
suppress the parameter $z$ in our Mathematica session) fulfill the
recurrence
\begin{mma}\MLabel{MMA:recJ}
\In |recJ|=z j[k]-(2k+3)j[k+1]+z j[k+2]==0;\\
\end{mma}
\noindent Since the same recurrence holds for $y_k(z)$,
see~(10.1.19), we obtain with
\begin{mma}\MLabel{MMA:recJY}
\In |recJY|=|REHadamard|[|recJ|,|recJ|,|j|[k]]/.\{|j|\rightarrow |f|\};\\
\Out (-2 k-5) z^2 |f|[k]+(2 k+3) (4
   k^2+16 k-z^2+15)f[k+1]\linebreak
   -(2 k+5) (4 k^2+16
   k-z^2+15) f[k+2]+(2 k+3)z^2
   f[k+3]=0\\
\end{mma}
\noindent a recurrence in the form~\eqref{recK} for
$f(k):=j_k(z)y_k(z)$. Since $f(k)$ is free of $n$, we choose
$f[n+1,k]==f[k]$ for the required recurrence of the
form~\eqref{recN}. Given these recurrences we are ready to compute
a recurrence for our sum
\begin{mma}
\In |mySum|=\sum_{k=0}^n\frac{-n!z^{n+1}(2k+1)}{(n-k)!(n+k+1)!}|f|[k];\\
\end{mma}
\noindent by using the {\sf\small Sigma}-function
\begin{mma}\MLabel{MMA:RecForjy}
\In |GenerateRE|[|mySum|,n,\{|recJY|,f[k]\},f[n+1,k]==f[k]]\\
\Out 2z|SUM|[n]-(2n+3)|SUM|[n+1]+2z|SUM|[n+2]==0\\
\end{mma}
\noindent Note that
$S(n)=\sum_{k=0}^nh(n,k)f(k)(=\textrm{mySum}=\textrm{SUM}[n])$.
Since besides $S(n)$ also $j_n(2z)$ fulfills the computed
recurrence and since $S(n)=j_n(2z)$ for $n=0,1$, we have
$S(n)=j_n(2z)$ for all $n\geq0$.

\medskip

\noindent \textit{A correctness proof.} Denote $\Delta_kg(z,k):=g(z,k+1)-g(z,k)$. The correctness of the
produced recurrence follows from the computed proof certificate
\begin{equation}\label{Equ:CreaEqu}
\Delta_k g(n,k)=c_0 h(n,k)f(k)+c_1 h(n+1,k)f(k)+c_2h(n+2,k)f(k)
\end{equation}
given by $c_0=2z$, $c_1=-(2n+3)$, $c_2=2z$ and
$$g(n,k)=\frac{z^{n+1}
n!}{(2k+3)(n+k+2)!(n-k+2)!}\big[g_0\textrm{f}(k)
+g_1\textrm{f}(k+1)+g_2\textrm{f}(k+2)\big]$$ with
\begin{align*}
g_0=&8k^5-8(n-1)k^4-(z^2+28n+30)k^3+2(2n^2+(2z^2-9)n+2z^2-19)\\
&k^2+((z^2+8)n^2+(8z^2+15)n+8z^2+1)k+(n^2+3n+2)(2z^2+3)\\
g_1=&(2k+3)(k-n-2)(2k^3+(3-2n)k^2-(5n+2)k+(n+1)(z^2-3)),\\
g_2=&-(k+1)(k-n-2)(k-n-1)z^2.
\end{align*}
Namely, one can show that~\eqref{Equ:CreaEqu} holds for all
$n\geq0$ and $0\leq k\leq n$ as follows. Express $\Delta_k g(n,k)$
in terms of $f(k)$ and $f(k+1)$ by using the recurrence given
in~\myOut{\ref{MMA:recJY}} and rewrite any factorial
in~\eqref{Equ:CreaEqu} in terms of $(n+k+2)!$ and $(n-k+2)!$.
Afterwards verify~\eqref{Equ:CreaEqu} by polynomial arithmetic.
The summation of~\eqref{Equ:CreaEqu} over $k$ from $0$ to $n$
gives the recurrence in~\myOut{\ref{MMA:RecForjy}}; here we needed
the first evaluations of $f(i)=j_i(z)y_i(z)$, $i=0,1,2$,
from~(10.1.11) and~(10.1.12).

\smallskip

\noindent We remark that the underlying algorithms~\cite{Schneider:05} unify the creative telescoping paradigm~\cite{Zeilberger91} in the difference field setting~\cite{Schneider:07} and holonomic setting~\cite{Chyzak:00}. This general point of view opens up interesting applications, e.g., in the field of combinatorics~\cite{Schneider:05b} and particle physics~\cite{ABRS:12}.

\subsec{LODE and initial conditions for (10.1.48)}

For the proof of~(10.1.48) we choose the package \textsf{\small HolonomicFunctions}.
As we have seen before, holonomic closure properties include algebraic
substitution; but since $\sin(\!\theta\!)$ is not algebraic, we have to transform
identity (10.1.48) slightly in order to make it accessible to our software:
just replace $\cos(\!\theta\!)$ by $c$ and $\sin(\!\theta\!)$ by $\sqrt{1-c^2}$.
Now it is an easy task to compute a LODE in~$z$ for the left hand side:
\begin{mma}\MLabel{MMA:lhs48}
\In |Annihilator|\left[|BesselJ|\left[0, z \sqrt{1-c^2}\right], \> |Der|[z]\right]\\
\Out \big\{z D_{\!z}^2+D_{\!z}+(z-c^2z)\big\}\\
\end{mma}

The sum on the right hand side requires some more work. Similar to identity
(10.1.49) above, the technique of creative telescoping~\cite{Zeilberger91} is
applied and it fits perfectly to the {\sf\small HolonomicFunctions}
package. The latter can deal with multivariate holonomic functions and
sequences, i.e., roughly speaking, mathematical objects that satisfy (for each
variable in question) either a LODE or a LORE of arbitrary (but fixed) order. For example,
the expression
\[
  f(n,z,c) = (4n+1)\frac{(2n)!}{2^{2n}n!^2}j_{2n}(z)P_{2n}(c)
\]
satisfies a LORE in~$n$ of order~$4$ and LODEs w.r.t. $z$ and $c$, both of
order~$2$. To derive a LODE in~$z$ for the sum we employ the following command
(the shift operator $S_{\!n}$, defined by $S_{n}f(n)=f(n+1)$, is input as
{\sf\small S}$[n]$, and the derivation $D_{\!z}$, defined by $D_{\!z}f(z)=f'(z)$, is
input as {\sf\small Der}$[z]$):
\begin{mma}
\In |CreativeTelescoping|[(4n+1)(2n)! / (2^{2n}n!^2) \, |SphericalBesselJ|[2n, z] \, |LegendreP|[2n, c],\linebreak
    |S|[n] - 1, \> |Der|[z]]\\
\Out \bigg\{\big\{z D_{\!z}^2+D_{\!z}+(z-c^2z)\big\},
  \bigg\{\frac{4(n+1)^2}{4n+5}S_{\!n}D_{\!z} + \frac{4(n+1)^2(8n^2+18n-z^2+9)}{(4n+3)(4n+5)z}S_{\!n} +
  \frac{4n^2}{4n+1}D_{\!z} + \linebreak
  \displaystyle \frac{-16c^2n^2z^2-16c^2nz^2-3c^2z^2+32n^4+40n^3+4n^2z^2+12n^2+4nz^2+z^2}{(4n+1)(4n+3)z}\bigg\}\bigg\}\\
\end{mma}
\noindent The output consists of two operators, say $P$ and~$Q$, which are called
\emph{telescoper} and \emph{certificate} (note already that $P$ equals
\myOut{\ref{MMA:lhs48}}). They satisfy the relation
\begin{equation}\label{eq.ct}
  \big(P+(S_{\!n}-1)Q\big) \, f(n,z,c) = 0,
\end{equation}
a fact that can be verified using the well-known LODEs and LOREs for spherical
Bessel functions and Legendre polynomials. Summing \eqref{eq.ct} w.r.t.~$n$
and telescoping yields
\[
  P \sum_{n=0}^{\infty} f(n,z,c) - (Qf)(0,z,c) + \lim_{n\to\infty} (Qf)(n,z,c) = 0
\]
($P$ is free of $n$ and $S_{\!n}$ and therefore can be interchanged with the
summation quantifier). Using (9.3.1) and (10.1.1) it can be shown that the
limit is~$0$, and also the part $(Qf)(0,z,c)$ vanishes.

Consequently, we have established that both sides of (10.1.48) satisfy the
same second-order LODE. It suffices to compare the initial conditions at $z=0$
(see Section~\ref{se:anal}).  For the left hand side we have $J_0(0)=1$.  From
(10.1.25) it follows that the Taylor expansion of $j_{2n}(z)$ starts with
$z^{2n}$ and hence for $z=0$ all summands are zero except the first one. With
(10.1.11) we see that the initial conditions on both sides agree.

Before turning to the next identity, we want to point to~\cite{Schneider:07}
where a different computer algebra proof of (10.1.48) has been given. More
examples of proving special function identities with the {\sf\small HolonomicFunctions}
package are collected in~\cite{KoutschanMoll11}.

\subsec{LODE and initial conditions for (10.1.52)}
Again we compute a LODE with {\sf\small Sigma}.  In order to get a LODE of the left
hand side of~(10.1.52) we compute a LODE of its truncated version
\begin{mma}
\In |mySum|=\sum_{k=0}^a j[k]^2;\\
\end{mma}
\noindent Note that the summand of our input-sum depends
non-linearly on $j_k(z)$. In order to handle this type of
summation input, {\sf\small Sigma} needs in addition the
package~\cite{Gerhold:02}
\begin{mma}
\In << |OreSys.m| \\
\Print OreSys package by Stefan Gerhold
\copyright\ RISC-Linz\\
\end{mma}
\noindent for uncoupling systems of LODE-systems. Then using a new
feature of {\sf\small Sigma} we can continue as ``as usual''.  Given the
difference-differential equation of the form~\eqref{recZ} for
$j(k):=j_k(z)$ and $j^{(0,1)}(k,z):=\frac{\d}{\d z}j_k(z)$:
\begin{mma}\MLabel{MMA:recZ}
\In |recZ|=j^{(0,1)}[k,z]==\frac{k}{z}j[k]+j[k+1];\\
\end{mma}
\noindent see~(10.2.20), and the recurrence~\myIn{\ref{MMA:recJ}}
of the form~\eqref{recN}, we compute a LODE for
$\textrm{mySum}(=\textrm{SUM}[n])$:
\begin{mma}
\In |mySum|=\sum_{k=0}^a j[k]^2;\\
\end{mma}
\begin{mma}\MLabel{recSumJ2}
\In |GenerateDE|[|mySum|,n,\{|recJ|,j[k]\},|recZ|]\\
\Out z|SUM|'[z]+|SUM|[z]== (z j[a]j[a+1]-(2a+1)j[a]^2) - (z j[0]j[1]-j[0]^2) \\
\end{mma}

\noindent\textit{A correctness proof.} The correctness of the LODE can be checked by the computed proof
certificate
\begin{equation}\label{Equ:CreaD}
\Delta_kg(z,k)=c_0 j(k)^2+c_1 j^{(0,1)}(k,z)^2
\end{equation}
with $c_0=1$, $c_1=z$ and $g(z,k)=zj(k)j(k+1)-(2k+1)j(k)^2$.
Namely, one can easily show that~\eqref{Equ:CreaD} holds for all
$0\leq k$ as follows. Express~\eqref{Equ:CreaD} in terms of $j(k)$
and $j(k+1)$ by using the recurrence given
in~\myIn{\ref{MMA:recJ}} and the difference-differential equation
given in~\myIn{\ref{MMA:recZ}}. Afterwards
verify~\eqref{Equ:CreaD} by polynomial arithmetic. Then
summing~\eqref{Equ:CreaD} over $k$ from $0$ to $a$ gives the
recurrence in~\myOut{\ref{recSumJ2}}; here we used the initial
values~(10.1.11).

\medskip

\noindent Next, we let $a\to\infty$. Then $j_a(z)$ tends to zero by~(9.3.1).
Therefore, the left hand side of~(10.1.52) satisfies the LODE
  \begin{equation}\label{odeJS}
    S(z) + z\frac{\mathrm{d}S(z)}{\mathrm{d}z} = \frac{\sin(2z)}{2z}.
  \end{equation}
It is readily checked that the right hand side satisfies it, too,
and both sides equal~$1$ at $z=0$. This establishes equality of
both sides of~(10.1.52).

Alternatively, we can derive the inhomogeneous differential equation
for the left hand side of (10.1.52) with {\sf\small HolonomicFunctions}:
\begin{mma}
\In |Annihilator|[|Sum|[|SphericalBesselJ|[n,z]^2, \{n, 0, |Infinity|\}], \> |Der|[z], \> |Inhomogeneous| \to |True|]\\
\Out \{\{zD_{\!z}+1\}, \{\text{Hold}[\text{Limit}[\dots, n\to\infty]] + \dots\}\}\\
\end{mma}
\noindent The output consists of a differential operator and an expression that gives
the inhomogeneous part (abbreviated above). Without help, Mathematica is not
able to simplify the latter (i.e., compute the limit), but using (9.3.1) it
succeeds and we get
\[
  \big(zD_{\!z}+1\big)S(z) - \frac{\sin(z)\cos(z)}{z} = 0
\]
which of course agrees with~\eqref{odeJS}.

\section{Series Solutions of LODEs and Analyticity}\label{se:anal}

In some proofs we have determined a differential equation that is satisfied by both
sides of the identity in question, and then compared initial values.
In contrast to the case of recurrences, the validity of this approach needs some non-trivial justification.
This procedure can be justified by well-known uniqueness results for solutions of LODEs,
to be outlined in this section.
In the proofs of (10.1.39), (10.1.40), (10.2.30), and (10.2.31), the point $t=0$, where we checked
initial conditions, is an ordinary point of the LODE (i.e., the leading coefficient of the
LODE does not vanish at $t=0$).
Then there is a unique analytic solution, if the number of prescribed initial values equals
the order of the equation.
The identity then holds (at least) in the domain (containing zero) where we can establish analyticity of both sides.

\begin{proposition}
  Identity (10.1.40) holds for all complex $t$ and all complex $z\neq 0$.
  The same is true for~(10.2.31).
\end{proposition}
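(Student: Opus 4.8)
The plan is to prove both identities by the principle just announced: exhibit a common LODE in~$t$ for which $t=0$ is an ordinary point, match enough initial data, and then control the domain of validity through the analyticity of the two sides. The decisive point — and what distinguishes (10.1.40) from (10.1.39) — is that the left-hand side $\frac1z\cos\sqrt{z^2-2zt}$ is \emph{entire} in~$t$: although $t\mapsto\sqrt{z^2-2zt}$ carries a branch point at $t=z/2$, the even function $\cos\sqrt{w}=\sum_{k\ge0}(-1)^kw^k/(2k)!$ is entire, so the composition is an entire function of~$t$ for every fixed $z\neq0$. Thus, unlike the $\sin$-case, no finite radius is forced on the left.

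First I would fix $z$ in the right half-plane $\Re z>0$, where $\sqrt{z^2}=z$ and the two initial values checked as in Section~\ref{se:comp} are legitimate. By that computation both sides satisfy the same second-order LODE in~$t$, whose induced three-term recurrence has leading coefficient $z(n+1)(n+2)$; this is nonzero for every $n\ge0$ since $z\neq0$. Hence the recurrence propagates the two matching initial Taylor coefficients to all orders, so the $t$-Taylor coefficients $c_n(z)$ of the left-hand side and the prescribed coefficients $j_{n-1}(z)/n!$ of the right-hand side coincide for every~$n$. Because the left-hand side is entire in~$t$, these coefficients are those of an entire function, so $\sum_{n\ge0}j_{n-1}(z)\,t^n/n!=\sum_{n\ge0}c_n(z)\,t^n$ has infinite radius of convergence and sums to $\frac1z\cos\sqrt{z^2-2zt}$. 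This already gives (10.1.40) for all complex~$t$, but only for $\Re z>0$.

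To remove the restriction on~$z$ I would argue coefficientwise. Each $c_n(z)$ is a $t$-Taylor coefficient of $\frac1z\cos\sqrt{z^2-2zt}$, which is jointly entire in $(t,z)$ off $z=0$, hence $c_n(z)$ is analytic on $\mathbb{C}\setminus\{0\}$; likewise each $j_{n-1}(z)/n!$ is analytic there. The two families agree on the open set $\{\Re z>0\}$, so by the identity theorem $c_n(z)=j_{n-1}(z)/n!$ throughout the connected domain $\mathbb{C}\setminus\{0\}$, for every~$n$. Consequently, for each $z\neq0$ the right-hand series is again the Taylor expansion of the entire function on the left, so it converges for all~$t$ and the identity holds for all complex~$t$ and all $z\in\mathbb{C}\setminus\{0\}$.

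Identity (10.2.31) is handled verbatim: its left-hand side $\frac1z\cosh\sqrt{z^2+2\i zt}$ is entire in~$t$ for the same reason ($\cosh\sqrt{w}$ is entire), the LODE obtained as in Section~\ref{se:comp} again has leading coefficient nonvanishing at $t=0$ for $z\neq0$, and the coefficients $(\i)^n\sqrt{\pi/(2z)}\,I_{n-1/2}(z)/n!$ are analytic on $\mathbb{C}\setminus\{0\}$, so the same two continuations apply. I expect the main obstacle to be conceptual rather than computational: one must resist reading the finite singular point $t=z/2$ of the LODE as a bound on the radius of convergence. The correct statement is that the \emph{generic} local solution is analytic only in $|t|<|z|/2$, whereas the particular $\cos$-type (resp.\ $\cosh$-type) solution selected by our initial data is entire; this is exactly what upgrades the $2|t|<|z|$ restriction of (10.1.39)/(10.2.30) to ``all~$t$'' here. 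The remaining care is pure bookkeeping — verifying that the initial-value check of Section~\ref{se:comp} is valid precisely on $\{\Re z>0\}$ before the identity theorem in~$z$ is invoked.
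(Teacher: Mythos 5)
Your proof is correct, and its skeleton (common LODE in $t$ with $t=0$ an ordinary point, matching initial data, entirety of the left-hand side via $\cos\sqrt{w}$ being entire) coincides with the paper's. But you diverge from the paper in one substantive way: the paper first proves that the right-hand side is an \emph{entire} function of $t$ a priori, using the asymptotics (9.3.1) for $j_n(z)$ to get uniform convergence, and only then invokes the uniqueness of analytic solutions of the LODE at an ordinary point. You instead work at the level of the coefficient recurrence: since its leading coefficient $z(n+1)(n+2)$ never vanishes for $n\geq 0$ and $z\neq 0$, the two matching initial coefficients propagate, so the series on the right is forced to be the Taylor series of the entire left-hand side, and its everywhere-convergence drops out as a \emph{consequence} rather than a hypothesis. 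This is the discrete counterpart of the paper's uniqueness argument and is arguably more elementary, since it needs neither the Bessel asymptotics nor the Cauchy existence theory for analytic ODEs. Your second addition, the identity-theorem continuation in $z$ from $\{\Re z>0\}$ to $\mathbb{C}\setminus\{0\}$, is sound but not strictly necessary for (10.1.40): because $\cos\sqrt{w}$ and $\sin(\sqrt{w})/\sqrt{w}$ are even, the initial values $\cos(\sqrt{z^2})/z=\cos(z)/z=j_{-1}(z)$ (and similarly the first derivative) carry no branch ambiguity, so the check already works for every $z\neq 0$ — this is precisely why (10.1.40) needs no sign caveat while (10.1.39) does. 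Your closing remark, that the singular point $t=z/2$ of the LODE constrains only the generic solution and not the particular even-in-$\sqrt{\cdot}$ solution selected by the initial data, is the right way to see why the restriction $2|t|<|z|$ disappears here.
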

\begin{proof}
  We consider~(10.1.40) and omit the analogous considerations for~(10.2.31).
  For $n\in\mathbb{Z}$, the function $j_{n-1}(z)$ is defined for $z\in\set{C}^*$.
  We fix such a $z$ and consider both sides of~(10.1.40) as
  functions of $t$.
  By~(9.3.1), the right hand side
  converges uniformly for all complex $t$, therefore it is an entire function of $t$.
  The left hand side is also entire, since $\cos\sqrt{w}=\sum_{n\geq 0}(-1)^nw^n/(2n)!$
  is an entire function of $w$.
  Initial values at $t=0$ and an LODE satisfied by both sides were already presented in Section~\ref{se:comp}, hence,
  by the above uniqueness property, identity~(10.1.40) is proved.\qed
\end{proof}

\begin{proposition}
  Identity (10.1.39) holds for all complex $z$ and $t$ with $|\Im(z)|\leq \Re(z)$
  and $2|t|<|z|$. If $|\Im(z)|\leq -\Re(z)$,
  then the identity holds with switched sign for all $t$ with $2|t|<|z|$.
  The same is true for~(10.2.30).
\end{proposition}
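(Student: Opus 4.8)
The plan is to imitate the proof just given for (10.1.40): fix $z$, regard both sides as analytic functions of $t$, check that they satisfy a common second-order LODE for which $t=0$ is an ordinary point, and conclude by the uniqueness of analytic solutions. The genuinely new feature is that $\sin\sqrt{w}$, unlike $\cos\sqrt{w}$, is \emph{not} entire, so the principal square root on the left carries a branch cut; keeping its argument off that cut is exactly what forces the sector hypotheses on $z$ and, at the same time, produces the change of sign in the second sector.

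First I would settle analyticity of the right-hand side. With $c_n:=(-1)^n y_{n-1}(z)/n!$, the large-order asymptotics $y_{n-1}(z)\sim-(2n-3)!!/z^{n}$ give $\limsup_n|c_n|^{1/n}=2/|z|$, so $\sum_n c_n t^n$ has radius of convergence $|z|/2$ and is analytic in $t$ precisely on $2|t|<|z|$; this is the source of the hypothesis $2|t|<|z|$. For the left-hand side I would check that $t\mapsto z^2+2zt$ maps this disk into the slit plane $\mathbb{C}\setminus(-\infty,0]$: writing $z^2+2zt=z^2(1+2t/z)$ with $|2t/z|<1$, the factor $1+2t/z$ lies in $\{|\zeta-1|<1\}\subset\{|\arg\zeta|<\pi/2\}$, while either sector hypothesis gives $\Re(z^2)=(\Re z)^2-(\Im z)^2\ge0$, i.e. $|\arg z^2|\le\pi/2$; hence $|\arg(z^2+2zt)|<\pi$ and the principal square root, and therefore the left-hand side, is analytic in $t$ throughout the disk.

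It then remains to match the two initial values $f(0),f'(0)$ of the common LODE $zf+f'+(2t+z)f''=0$ from Section~\ref{se:comp}, and this is where the sectors diverge. If $|\Im z|\le\Re z$ then $\Re z\ge0$, so $\sqrt{z^2}=z$ and the left-hand side yields $f(0)=\sin z/z$, $f'(0)=\cos z/z$; since $y_{-1}(z)=\sin z/z$ and $y_0(z)=-\cos z/z$ by (10.1.12) and (10.1.19), these coincide with the initial data of the series, and uniqueness gives (10.1.39) as written. If instead $|\Im z|\le-\Re z$ then $\Re z\le0$, so now $\sqrt{z^2}=-z$ and the left-hand side yields $f(0)=-\sin z/z$, $f'(0)=-\cos z/z$, which are exactly the initial values of the \emph{negated} series. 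As the LODE is linear and homogeneous, $-\sum_n c_n t^n$ is again a solution, so uniqueness delivers the identity with switched sign. The argument for (10.2.30) is identical after the obvious replacements.

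I expect the only real obstacle to be this branch-cut bookkeeping — verifying cleanly that the disk $2|t|<|z|$ is carried into the slit plane by $t\mapsto z^2+2zt$ for every admissible $z$, and then tracking how the two determinations $\sqrt{z^2}=\pm z$ convert into the two signs. The remaining ingredients (convergence, the shared LODE, and the final uniqueness step) are already in place from the treatment of (10.1.40).
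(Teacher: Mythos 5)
Your proof is correct, and its skeleton---radius of convergence $|z|/2$ for the series via the large-order asymptotics of $y_{n-1}$, the shared second-order LODE $zf+f'+(2t+z)f''=0$ with $t=0$ an ordinary point, matching of $f(0)$ and $f'(0)$ through $\sqrt{z^2}=\pm z$ in the two sectors, and uniqueness of analytic solutions---is exactly the paper's. The one step you handle genuinely differently is the analyticity of the left-hand side on the disk $2|t|<|z|$. The paper views the obstruction as the branch cut of $\sqrt{z^2+2zt}$, a ray emanating from $t=-z/2$ on the circle of convergence, and invokes Mathematica's \textsf{Reduce} (real quantifier elimination) to show that this ray meets the circle a second time precisely when $|\Re(z)|<|\Im(z)|$, so that under the sector hypotheses the cut stays outside the open disk. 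You instead show directly that $t\mapsto z^2+2zt=z^2(1+2t/z)$ maps the open disk into the slit plane: $|2t/z|<1$ puts $1+2t/z$ in the disk $\{|\zeta-1|<1\}$, hence $|\arg(1+2t/z)|<\pi/2$, while $|\Im(z)|\leq|\Re(z)|$ gives $\Re(z^2)\geq 0$, i.e.\ $|\arg(z^2)|\leq\pi/2$, so the arguments sum to something of modulus strictly less than $\pi$ and the product avoids $({-\infty},0]$. This is more elementary, dispenses with the computer-algebra step entirely, and makes transparent why $\Re(z^2)\geq0$ is exactly the right hypothesis; the two statements (image of the disk avoids the slit, versus the cut ray not re-entering the disk) are of course equivalent. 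Everything else, including the sign bookkeeping $\sqrt{z^2}=z$ for $\Re(z)>0$ and $\sqrt{z^2}=-z$ for $\Re(z)<0$ and the final appeal to linearity to negate the series solution, matches the paper's treatment.
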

\begin{proof}
  We give the proof in the case of~(10.1.39); (10.2.30) is treated analogously.
  First we complete the check of initial values from Section~\ref{se:comp}.
  For $t=0$, the right hand side is $y_{-1}(z) = (\sin z)/z$, and on the left hand
  side we have $(\sin\sqrt{z^2})/z$. Thus, at $t=0$ both sides agree for $|\arg(z)|<\pi/2$,
  which follows from $|\Im(z)|\leq \Re(z)$;
  for $\pi/2<|\arg(z)|<\pi$, which follows from $|\Im(z)|\leq -\Re(z)$,
  the identity holds at $t=0$ with switched sign, because the function
  $w\mapsto \sqrt{w^2}$ changes sign when crossing the branch cut $\i\set{R}$.
  The first derivatives at $t=0$ are $(\cos\sqrt{z^2})/\sqrt{z^2} = (\cos z)/\sqrt{z^2}$
  and $-y_0(z)=(\cos z)/z$,
  respectively. The same consideration as for the first initial value completes the check of the initial conditions.

  Now we show that both sides of~(10.1.39) are analytic functions of $t$ for fixed $z\neq 0$
  with $|\Im(z)|\leq |\Re(z)|$.
  Let us start by determining the radius of convergence of the right hand side. 
	It is an easy consequence of (9.3.1) that

	\[
  y_n(z) \sim -\frac{\sqrt{2}}{z} \left( \frac{2n}{\mathrm{e}z}\right)^n,
    \quad n\to\infty,\ z\neq 0.
\] 	
Hence, by Stirling's formula, the radius of convergence is $|z|/2$, and
so the right hand side is analytic for $2|t|<|z|$.
	

  The left hand side of~(10.1.39) has a branch cut along a half-line starting
  at $t=-z/2$, a point on the circle of convergence of the right hand side.
  If this half line has no other intersection with this circle,
  then the left hand side is analytic in the disk $\{t:2|t|<|z|\}$. Otherwise,
  the branch cut separates the disk into two segments, and the identity does not necessarily hold
  in a segment that does not contain $t=0$. As we will now show, our assumptions exclude
  the possibility of a second intersection.
  Once again it is convenient to proceed by computer algebra. Note that the presence
  of two intersections means that
  \[
    \Bigl(\exists s\neq t\in\set{C}\Bigr)
    \Bigl(2|s|=|z| \wedge 2|t|=|z| \wedge z^2+2zs \in {]{-\infty},0]} \wedge z^2+2zt \in {]{-\infty},0]}\Bigr)
  \]
  holds. Upon rewriting this formula with real variables,
  it can be simplified by Mathematica's Reduce command; the result
  -- translated back into complex language -- is the equivalent formula
  $|\Re(z)|<|\Im(z)|$. Summing up, under our assumptions on $z$ the left hand side
  of~(10.1.39) is analytic in the disk $\{t:2|t|<|z|\}$.\qed
\end{proof}

%
%

\medskip

Now consider the LODE~\eqref{odeJS}, which we want to employ to
prove (10.1.52). The point $z=0$ is not an ordinary point, so the
question of uniqueness of the solution is more subtle. The origin
is a regular singular point of~\eqref{odeJS}, since the degree of
the indicial polynomial
\[
  [z^0]p_s(z)^{-1} z^{s-\sigma} \mathcal{L} z^\sigma = \sigma + 1
\]
agrees with the order $s=1$ of the LODE. Here,
$p_s(z)=z$ denotes the leading coefficient, and $\mathcal{L}$ the differential operator
\[
  \mathcal{L} := 1 + z D_{\!z}.
\]

The following classical result~\cite{Ince26} describes the structure of a fundamental system
at a regular singular point. See also the concise exposition in Meunier and Salvy~\cite{MeSa03}.

\begin{theorem}\label{thm:ode}
  Let $z=0$ be a regular singular point of a homogeneous LODE of order~$s$. Denote
  the roots of the indicial polynomial by $\sigma_1,\dots,\sigma_s$, and let $m_1,\dots,m_s$ be their multiplicities.
  Then the equation has a basis of $s$ solutions
  \begin{equation}\label{eq:log series}
    z^{\sigma_i} \sum_{j=0}^{d_i} \log^j(z) \Phi_{ij}(z),\qquad 1\leq i\leq s,
  \end{equation}
  where $d_i<s$, and the $\Phi_{ij}(z)$ are convergent power series. Each of these solutions
  is uniquely defined by the coefficients of the $s$ ``monomials''
  \[
    \bigcup_{i=1}^s \left\{ z^{\sigma_i}, z^{\sigma_i}\log z,\dots, z^{\sigma_i}\log^{m_i-1} z  \right\}
  \]
  in the series~\eqref{eq:log series}.
\end{theorem}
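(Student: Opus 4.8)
The plan is to run the classical Frobenius method, organized around the Euler operator $\theta := zD_{\!z}$. First I would put the operator $\mathcal{L}$ into the normal form
\[
  \mathcal{L} = \sum_{k\ge 0} z^{k} L_k(\theta),
\]
where each $L_k$ is a polynomial in $\theta$ of degree at most $s$, the coefficient series converges near $0$, and $L_0$ has degree exactly $s$; this last fact is precisely what regularity of the singular point buys us (writing each $z^kD_{\!z}^k$ as the falling factorial $\theta(\theta-1)\cdots(\theta-k+1)$ shows $L_0$ is the indicial polynomial, up to the normalization used in the statement). The computational engine is $\theta z^{\rho}=\rho z^{\rho}$, giving $z^{k}L_k(\theta)z^{\rho}=L_k(\rho)z^{\rho+k}$. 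Substituting a trial series $y=\sum_{n\ge 0}c_n z^{\sigma+n}$ and collecting powers of $z$ yields the indicial equation $L_0(\sigma)=0$ at lowest order and, at order $N\ge 1$, the recurrence
\[
  L_0(\sigma+N)\,c_N = -\sum_{k=1}^{N} L_k(\sigma+N-k)\,c_{N-k}.
\]

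Next I would split into cases according to resonances among the exponents. In the non-resonant case---no two $\sigma_i$ differing by an integer and all roots simple---each root satisfies $L_0(\sigma_i+N)\neq 0$ for every $N\ge 1$, so the recurrence determines $c_N$ uniquely from $c_0=1$, producing $s$ genuine power-series solutions $z^{\sigma_i}\Phi_i(z)$ with no logarithms. Convergence of each $\Phi_i$ follows from a standard majorant estimate, using that $L_0(\sigma_i+N)$ grows like $N^{s}$ while the right-hand side is controlled by the radius of convergence of the coefficient functions. The heart of the proof is the resonant case: repeated roots of $L_0$, and roots differing by positive integers, are exactly the situations in which the leading coefficient $L_0(\sigma+N)$ vanishes, the naive series breaks down, and logarithms are forced in.

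To treat resonances uniformly I would regard $\sigma$ as a free parameter. Solving the recurrence with $c_0(\sigma)$ taken to be a product that clears the finitely many offending denominators, I form $Y(z,\sigma)=z^{\sigma}\sum_{n\ge 0}c_n(\sigma)z^{n}$; by construction $\mathcal{L}\,Y(z,\sigma)=\chi(\sigma)\,z^{\sigma}$ for an explicit polynomial $\chi$ whose zero at each $\sigma_i$ has order equal to the multiplicity $m_i$. Since $\partial_\sigma z^{\sigma}=z^{\sigma}\log z$, differentiating $j$ times in $\sigma$ and evaluating at $\sigma=\sigma_i$ annihilates $Y$ whenever $0\le j\le m_i-1$, and each such derivative has exactly the shape $z^{\sigma_i}\sum_{l=0}^{j}\log^{l}(z)\,\Phi_{i,l}(z)$ demanded by \eqref{eq:log series}. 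Here $d_i<s$ because the solutions built across one congruence class (mod $\mathbb{Z}$) of exponents are indexed by the roots in that class, whose multiplicities sum to at most $s$; analyticity of $Y$ in $\sigma$ transfers convergence to every $\Phi_{ij}$.

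Finally, for the uniqueness clause I would note that the $s$ constructed solutions carry pairwise distinct leading ``monomials'' drawn from $\bigcup_i\{z^{\sigma_i},\dots,z^{\sigma_i}\log^{m_i-1}z\}$; since these functions are linearly independent, the solutions are independent and hence a basis, and the linear map sending a solution to its $s$ coefficients on these monomials is injective, so an isomorphism by equality of dimension. I expect the main obstacle to be the bookkeeping in the resonant case---ordering each congruence class of exponents, checking that the parameter-differentiation trick delivers exactly $m_i$ independent solutions at $\sigma_i$ with logarithm degree bounded by $s-1$, and confirming the total count is $s$---rather than the (routine) convergence estimates or the closing dimension count.
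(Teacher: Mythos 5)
The paper does not prove this theorem at all: it is quoted as a classical result with a pointer to Ince's book and to Meunier--Salvy, so there is no ``paper proof'' to match. Your Frobenius-method reconstruction is the standard route and is exactly what the cited sources carry out: the Euler-operator normal form $\mathcal{L}=\sum_k z^kL_k(\theta)$, the recurrence $L_0(\sigma+N)c_N=-\sum_{k=1}^N L_k(\sigma+N-k)c_{N-k}$, the majorant argument for convergence, and the differentiation-in-$\sigma$ trick for resonances are all correct in outline.

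There is, however, one concrete misstatement at the heart of the resonant case, and it is precisely the bookkeeping you defer. If $c_0(\sigma)$ is chosen to clear the offending denominators, then $\chi(\sigma)=c_0(\sigma)L_0(\sigma)$ vanishes at $\sigma_i$ not to order $m_i$ but to order $\mu_i=\sum_{j}m_j$, the sum running over all roots $\sigma_j$ with $\sigma_j-\sigma_i\in\mathbb{Z}_{\geq 0}$; equality $\mu_i=m_i$ holds only when $\sigma_i$ is the top of its congruence class. Consequently the derivatives $\partial_\sigma^jY|_{\sigma=\sigma_i}$ for $0\leq j\leq m_i-1$ are \emph{not} the ones that furnish the $m_i$ new solutions attached to $\sigma_i$: for those small $j$ the factor $c_0(\sigma_i)$ (which vanishes when roots lie above $\sigma_i$) kills the leading term, and the resulting functions either vanish or reproduce solutions already obtained from the higher roots. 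The $m_i$ genuinely new solutions come from $j=\mu_i-m_i,\dots,\mu_i-1$, and verifying that each of these has leading monomial $z^{\sigma_i}\log^{j-(\mu_i-m_i)}z$ (so that the coefficient matrix on the designated monomials is triangular with nonzero diagonal) is exactly what makes both the independence of the $s$ solutions and the uniqueness clause work. As written, your argument for injectivity of the evaluation map rests on this unproved triangularity, so the gap you flag as ``bookkeeping'' is in fact the one step where your stated quantitative claim is wrong and must be repaired. For the paper's applications the issue is invisible (the indicial polynomials are $\sigma+1$ and $\sigma^2$, a single root each), but the theorem is stated in general.
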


\begin{proposition}
  Identity~(10.1.52) holds for all $z\in\set{C}$.
\end{proposition}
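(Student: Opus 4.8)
The plan is to exploit the inhomogeneous LODE~\eqref{odeJS} that both sides of~(10.1.52) satisfy, use the regular-singular-point structure theorem (Theorem~\ref{thm:ode}) to force the difference of the two sides to vanish near $z=0$, and then propagate equality to all of $\set{C}$ by analytic continuation. Write $S(z)=\sum_{n=0}^\infty j_n^2(z)$ for the left hand side and $R(z)=\Si(2z)/(2z)$ for the right hand side. First I would record that $R$ solves~\eqref{odeJS}: since $zR(z)=\tfrac12\Si(2z)$, differentiating gives $R+zR'=\tfrac{\d}{\d z}\bigl(\tfrac12\Si(2z)\bigr)=\sin(2z)/(2z)$. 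That $S$ satisfies~\eqref{odeJS} was already established in the preceding subsection. Hence the difference $D:=S-R$ is a solution of the \emph{homogeneous} equation $D+zD'=0$.

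Next I would bring in the regular singular point analysis set up just above the theorem. The indicial polynomial is $\sigma+1$, with the single root $\sigma_1=-1$ of multiplicity $m_1=1$; since the order is $s=1$, Theorem~\ref{thm:ode} produces a one-dimensional solution space spanned by a function of the form $z^{-1}\Phi(z)$ with $\Phi$ a convergent power series, and — crucially — asserts that such a solution is \emph{uniquely} determined by its coefficient of the single monomial $z^{\sigma_1}=z^{-1}$. Now both $S$ and $R$ are analytic at the origin with value $1$: the expansion $\Si(2z)=2z-\tfrac{(2z)^3}{3\cdot 3!}+\cdots$ gives $R(z)=1+O(z^2)$, while $j_0(z)=\sin(z)/z$ together with the fact (from~(10.1.25)) that $j_n(z)=O(z^n)$ gives $S(z)=1+O(z^2)$. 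Consequently $D$ is analytic at $z=0$, so its coefficient of $z^{-1}$ is zero; by the uniqueness clause of Theorem~\ref{thm:ode}, $D$ is the zero solution, and therefore $S=R$ on a neighborhood of the origin.

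To finish, I would extend the local identity to all of $\set{C}$. Both sides are in fact entire: $R$ is entire because $\Si$ is entire and the apparent pole at $z=0$ is removable, and $S$ is entire because each $j_n$ is entire and, by the rapid large-order decay of $j_n(z)=O\bigl(z^n/(2n+1)!!\bigr)$ underlying~(9.3.1), the series $\sum_n j_n^2(z)$ converges locally uniformly on $\set{C}$. Two entire functions agreeing on a neighborhood of $0$ coincide everywhere by the identity theorem, which yields~(10.1.52) for all $z\in\set{C}$.

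The delicate point — and the reason the machinery of Theorem~\ref{thm:ode} is needed at all — is that $z=0$ is \emph{not} an ordinary point of~\eqref{odeJS}, so one cannot simply invoke the standard existence/uniqueness of an analytic solution with prescribed value. Indeed the homogeneous equation $D+zD'=0$ admits the genuinely singular solution $D(z)=C/z$, and matching values at $z=0$ alone does not exclude it. What rescues the argument is that the only indicial root is negative and of multiplicity one: this forces every nonzero homogeneous solution to be singular at the origin, so analyticity of $D$ at $0$ — rather than a value comparison at an ordinary point — is precisely the condition that collapses the solution space to $\{0\}$.
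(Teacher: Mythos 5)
Your proposal is correct and follows essentially the same route as the paper's proof: both sides satisfy~\eqref{odeJS}, the indicial root $\sigma=-1$ of the homogeneous equation forces (via Theorem~\ref{thm:ode}) any nonzero homogeneous solution to be singular at the origin, so the analytic difference must vanish near $0$, and entirety plus analytic continuation finishes the job. Your explicit remark that the homogeneous solution $C/z$ is excluded by analyticity rather than by a value comparison is exactly the point the paper's uniqueness argument is making.
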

\begin{proof}
  We have shown in the preceeding section that both sides satisfy the LODE \eqref{odeJS}.
  As seen above, the indicial polynomial of the homogeneous equation ${\mathcal L}f=f + zf'=0$ is $\sigma+1$.
  Hence, by Theorem~\ref{thm:ode}, a solution of ${\mathcal L}f = 0$
  that has the form~\eqref{eq:log series} is uniquely defined by the coefficient of $z^{-1}$.
  Hence the zero function is the only analytic solution of the homogeneous initial value problem
  ${\mathcal L}f=0$, $f(0)=0$. It is a trivial consequence that the inhomogeneous equation~\eqref{odeJS}
  cannot have more than one analytic solution with $f(0)=1$.
  Therefore, (10.1.52) holds in a neighbourhood of $z=0$.
  The left hand side of (10.1.52) is entire since it is a uniform limit of entire functions,
  and the right hand side is entire by (5.2.14).
  Thus, the identity holds in the whole complex plane by analytic continuation.\qed
\end{proof}

\begin{proposition}
  Identity (10.1.48) holds for all complex $z$ and $\theta$.
\end{proposition}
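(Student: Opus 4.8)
The plan is to recycle the common LODE in $z$ from Section~\ref{se:comp} and upgrade the matching of initial data there into a genuine function identity via the regular-singular-point machinery of Theorem~\ref{thm:ode}, in the same spirit as the treatment of (10.1.52). Writing $c=\cos\theta$, both sides of (10.1.48) were shown to be annihilated by $\mathcal{L}:=zD_z^2+D_z+(1-c^2)z$. The first thing I would check is the nature of the point $z=0$: the leading coefficient $z$ vanishes there, so $z=0$ is singular, not ordinary, and a naive ``compare two initial values'' argument is not directly available. Substituting $f=z^\sigma$ gives $\mathcal{L}z^\sigma=\sigma^2 z^{\sigma-1}+(1-c^2)z^{\sigma+1}$, so the lowest-order term yields the indicial polynomial $\sigma^2$, with a single root $\sigma_1=0$ of multiplicity $m_1=2$. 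Since its degree equals the order $s=2$ of $\mathcal{L}$, the origin is a regular singular point and Theorem~\ref{thm:ode} applies.

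Next I would exploit the shape of the fundamental system. With $\sigma_1=0$ and $m_1=2$, the distinguished ``monomials'' of Theorem~\ref{thm:ode} are $z^0=1$ and $z^0\log z=\log z$, so any solution of $\mathcal{L}f=0$ is uniquely determined by the coefficients of $1$ and of $\log z$ in its expansion~\eqref{eq:log series}. An \emph{analytic} solution has vanishing $\log z$-coefficient and is therefore pinned down by its value $f(0)$ alone; in particular the only analytic solution of $\mathcal{L}f=0$ with $f(0)=0$ is $f\equiv 0$. This is the crux and the main obstacle: although $\mathcal{L}$ has order two, the second fundamental solution carries the usual $Y_0$-type logarithm, which collapses the space of analytic solutions to dimension one, so a single initial value suffices to separate the two sides. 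I would present the indicial computation above precisely to justify this, rather than relying on a spurious count of two conditions.

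It then remains to confirm that both sides are analytic at $z=0$ and share the value $1$ there. The left side is the entire function $J_0(z\sin\theta)$ with $J_0(0)=1$. For the right side, by~(10.1.25) each $j_{2n}(z)$ with $n\geq 1$ vanishes to order $2n$ at the origin, so only the $n=0$ term survives at $z=0$, contributing $(4\cdot 0+1)\tfrac{0!}{1}\,j_0(0)P_0(\cos\theta)=1$ by~(10.1.11); moreover the factorial decay of $j_{2n}(z)$ against the polynomially growing coefficients $(4n+1)\tfrac{(2n)!}{2^{2n}n!^2}$ forces uniform convergence on compact $z$-sets, so the series is entire in $z$ with constant term exactly this surviving $n=0$ term. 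By the uniqueness just established the two analytic solutions agree near $z=0$, and, both being entire in $z$, they coincide for all complex $z$ by analytic continuation. Finally, the entire argument is uniform in the parameter $c=\cos\theta$, and since the same decay estimate makes both sides entire in $\theta$ as well, equality on real $\theta$ propagates to all complex $\theta$ by analytic continuation, which completes the proof for all complex $z$ and $\theta$.
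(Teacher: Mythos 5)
Your argument follows essentially the same route as the paper's: both sides are annihilated by the common operator $zD_{\!z}^2+D_{\!z}+(1-c^2)z$, the origin is a regular singular point with indicial polynomial $\sigma^2$, Theorem~\ref{thm:ode} shows that an \emph{analytic} solution is pinned down by $f(0)$ alone (the second fundamental solution carrying a logarithm), the values at $z=0$ agree, and entirety of both sides upgrades local equality to global equality. You in fact spell out the indicial computation and the collapse of the analytic solution space to dimension one more explicitly than the paper, which simply cites Theorem~\ref{thm:ode}; that part is correct and welcome.

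The one place where your justification is incomplete is the entirety of the right-hand side in the variable $\theta$ (equivalently, locally uniform convergence for \emph{complex} $\theta$). Your convergence estimate balances the super-exponential decay of $j_{2n}(z)$ coming from (9.3.1) against the $\mathrm{O}(\sqrt{n})$ coefficients $(4n+1)(2n)!/(2^{2n}n!^2)$, but it silently ignores the factor $P_{2n}(\cos\theta)$. For real $\theta$ this factor is bounded by $1$ and your argument stands, but for complex $\theta$ it grows exponentially in $n$, and your closing clause that ``the same decay estimate makes both sides entire in $\theta$ as well'' asserts rather than proves that the factorial-type decay of $j_{2n}(z)$ still dominates. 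The paper closes exactly this gap by invoking the Laplace--Heine formula (Theorem~8.21.1 in Szeg\H{o}) to bound the growth of $P_{2n}(\cos\theta)$ by an exponential in $n$. The fix is a single sentence, but as written this step of your proof is unjustified; everything else matches the paper's proof.
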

\begin{proof}
  By the Laplace-Heine formula~\cite[Theorem~8.21.1]{Sz75}, $P_{2n}(\cos\theta)$ grows at most exponentially as $n\to\infty$.
  Together with~(9.3.1) and $n(2n)!/(2^{2n}n!^2)=\mathrm{O}(\sqrt{n})$, this shows that the right hand side of~(10.1.48) is an
  entire function of $z$ and $\theta$.
  In Section~\ref{Sec:SymSum} we showed that both sides of (10.1.48) satisfy
  the differential equation $zf''(z)+f'(z)+z(1-c^2)f(z)=0$ (whose indicial
  equation is $\sigma^2=0$) and that the initial condition at $z^0$ agrees.
  The result follows from Theorem~\ref{thm:ode} and the fact that both sides
  are entire functions.\qed
\end{proof}

\section{Non-Computer Proofs}

Some of our identities can be easily proved from some of the others,
without using any software machinery.
The computer proofs that we have in hand suffice for
establishing the remaining identities (10.1.42), (10.1.43), (10.1.44), and (10.2.34) in this spirit.
The reader should by now be convinced that, if desired, all of them can also be
proved by the algorithmic methods we have presented.

\begin{proposition}
  Identities (10.1.42), (10.1.43), and (10.1.44) follow from (10.1.41).
  They hold for $z\in\set{C}\setminus\set{R}_{\leq 0}$.
\end{proposition}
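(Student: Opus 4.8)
The plan is to treat all four formulas as identities between functions analytic on $\set{C}\setminus\set{R}_{\leq0}$ and to reduce the three unknown ones to (10.1.41) using only elementary relations for spherical Bessel functions. Two tools suffice. First I would record the \emph{order-continuous} connection formula: combining the definitions $j_\nu(z)=\sqrt{\pi/2z}\,J_{\nu+1/2}(z)$ and $y_\nu(z)=\sqrt{\pi/2z}\,Y_{\nu+1/2}(z)$ from (10.1.1) with the relation (9.1.2) expressing $Y$ through $J_{\pm(\nu+1/2)}$ gives
\[
  y_\nu(z)=-\tan(\nu\pi)\,j_\nu(z)-\sec(\nu\pi)\,j_{-\nu-1}(z),
\]
which specializes at integer order to $y_n=(-1)^{n+1}j_{-n-1}$ (10.1.15) but, crucially, is smooth in $\nu$ near $\nu=0$ and $\nu=-1$, so that it may be differentiated in the order. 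Second I would use the differentiation formula $\frac{\d}{\d z}j_\nu(z)=j_{\nu-1}(z)-\frac{\nu+1}{z}j_\nu(z)$, valid for all $\nu$. Throughout I write $G(\nu):=\left[\der\mu j_\mu(z)\right]_{\mu=\nu}$, so (10.1.41) reads $G(0)=\frac1z(\Ci(2z)\sin z-\Si(2z)\cos z)$.

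The hard part will be getting (10.1.42), the order-derivative at $\nu=-1$, from (10.1.41), the one at $\nu=0$. Differentiating the connection formula in $\nu$ is of no help here: it only ever couples a $j$-derivative to a $y$-derivative at the reflected order $\nu\mapsto-\nu-1$, and never isolates $G(-1)$ in terms of $G(0)$ alone. To break this deadlock I would instead apply $\der\nu$ to the differentiation formula. Because $\der\nu$ commutes with $\frac{\d}{\d z}$ and $\der\nu j_{-\nu-1}=-G(-\nu-1)$ by the chain rule, evaluating at $\nu=0$ yields the purely algebraic relation
\[
  G(-1)=\frac{\d}{\d z}G(0)+\frac1z\bigl(G(0)+j_0(z)\bigr),
\]
with no integration constant to fix. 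I would then substitute (10.1.41) and $j_0(z)=(\sin z)/z$, use $\frac{\d}{\d z}\Ci(2z)=(\cos2z)/z$ and $\frac{\d}{\d z}\Si(2z)=(\sin2z)/z$, and collapse the trigonometric part with the sine addition theorem $\sin2z\cos z-\cos2z\sin z=\sin z$; the right-hand side then simplifies exactly to $\frac1z(\Ci(2z)\cos z+\Si(2z)\sin z)$, which is (10.1.42).

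With (10.1.42) in hand, (10.1.43) and (10.1.44) would follow by differentiating the connection formula in $\nu$ and specializing. At $\nu=0$ the weight $\tan(\nu\pi)$ and the product $\sec(\nu\pi)\tan(\nu\pi)$ vanish while $\sec(\nu\pi)=1$, the only surviving new term being the $-\pi j_0(z)$ produced by differentiating $\tan(\nu\pi)$; this gives $\left[\der\nu y_\nu\right]_{\nu=0}=G(-1)-\pi j_0(z)$, and inserting (10.1.42) together with $j_0=(\sin z)/z$ produces (10.1.43). At $\nu=-1$ the same computation, now with $\sec(-\pi)=-1$ and reflected order $-\nu-1=0$, gives $\left[\der\nu y_\nu\right]_{\nu=-1}=-G(0)-\pi j_{-1}(z)$; inserting (10.1.41) and $j_{-1}(z)=(\cos z)/z$ reproduces (10.1.44), with the overall sign and the shift $\Si(2z)\mapsto\Si(2z)-\pi$ appearing automatically. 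Each manipulation is an identity between functions analytic on $\set{C}\setminus\set{R}_{\leq0}$, so the results hold throughout that domain; the only points needing care are the chain-rule sign in $\der\nu j_{-\nu-1}$ and the insistence on the order-continuous form of the connection formula, without which the $\nu$-differentiation that drives the whole argument would be meaningless.
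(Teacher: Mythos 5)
Your proof is correct, and its overall skeleton coincides with the paper's: reduce (10.1.43) and (10.1.44) to (10.1.42) and (10.1.41) via the reflection/connection formula for $Y_{\nu+1/2}$ (your $y_\nu=-\tan(\nu\pi)j_\nu-\sec(\nu\pi)j_{-\nu-1}$ is exactly the paper's use of (9.1.65), which at $\nu=0$ and $\nu=-1$ yields the same two relations $[\der\nu y_\nu]_{\nu=0}=G(-1)-\pi\sin z/z$ and $[\der\nu y_\nu]_{\nu=-1}=-G(0)-\pi\cos z/z$), and then close the loop by a first-order differential relation linking $G(0)$ and $G(-1)$. Your relation $G(-1)=G'(0)+\tfrac1z(G(0)+j_0(z))$ is literally the paper's equation \eqref{eq:lhsides} after multiplying by $z$. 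The genuine difference is how that relation is established: the paper expands both order-derivatives via the series (9.1.64) and collapses the difference using the recurrences for $\Gamma$ and $\psi$ plus the duplication formula, whereas you obtain it instantly by differentiating the ladder relation $\tfrac{\d}{\d z}j_\nu=j_{\nu-1}-\tfrac{\nu+1}{z}j_\nu$ with respect to the order and setting $\nu=0$. Your route avoids all series manipulation and is arguably cleaner, at the price of two things you should make explicit: that the ladder relation holds for continuous complex order (it descends from $J_\nu'=J_{\nu-1}-\tfrac\nu z J_\nu$), and that $j_\nu(z)$ is jointly analytic in $(\nu,z)$ so that $\der\nu$ and $\tfrac{\d}{\d z}$ commute. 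One small slip in the prose: in that step the term being order-differentiated is $j_{\nu-1}$, whose $\nu$-derivative at $\nu=0$ is $G(-1)$ with no chain-rule sign; the sign $\der\nu j_{-\nu-1}=-G(-\nu-1)$ is only needed later for the connection formula. Your displayed relation is nevertheless correct, and the subsequent substitutions and trigonometric simplifications check out, so the argument stands.
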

\begin{proof}
Identities (10.1.42), (10.1.43), and (10.1.44) can be done analogously to (10.1.41), but we instead
present (non-computer) deductions from (10.1.41).
The derivative of $Y_\nu$ w.r.t. $\nu$ can be expressed in terms of $J_\nu$,
$J_{-\nu}$, and $Y_\nu$, see (9.1.65) in the appendix.
Note that $\cot{(\nu+1/2)\pi}$ vanishes for $\nu=0,1$. (9.1.65) thus yields
\[
  \left[\der{\nu}y_\nu(z)\right]_{\nu=0}
  = \left[\der{\nu}j_\nu(z)\right]_{\nu=-1} -\frac{\pi\sin z}{z}
\]
and
\[
  \left[\der{\nu}y_\nu(z)\right]_{\nu=-1}
   = -\left[\der{\nu}j_\nu(z)\right]_{\nu=0} -\frac{\pi\cos z}{z}.
\]
Therefore, we have a relation between the left hand sides of (10.1.42) and (10.1.43),
and one between the left hand sides of (10.1.41) and (10.1.44).
It is easy to verify that the respective right hand sides satisfy
the same relations.
Hence the assertion will be established once we show that (10.1.42)
follows from (10.1.41). To this end,
it suffices to show that the left hand sides of these identities satisfy
\begin{equation}\label{eq:lhsides}
  \der{z}\left( z \left[ \der{\nu} j_\nu(z) \right]_{\nu=0} \right)
  - z \left[ \der{\nu}j_\nu(z) \right]_{\nu = -1} = -\frac{\sin z}{z},
\end{equation}
since once again it is easy to see that the right hand sides of (10.1.41)
and (10.1.42) obey the same relation.
By (9.1.64), the recurrence relations of $\Gamma$ and~$\psi$,
and the duplication formula of~$\Gamma$, the left hand side of~\eqref{eq:lhsides} equals
\begin{align*}
  \phantom{=}& \frac{\sin z}{z}-\sqrt{\pi} \sum_{k=0}^\infty(-\tfrac14)^k
   \left( \frac{\psi(k+\tfrac32)}{\Gamma(k+\tfrac32)/(k+\tfrac12)}
   -\frac{\psi(k+\tfrac12)}{\Gamma(k+\tfrac12)}  \right) \frac{z^{2k}}{k!} \\
  =& \frac{\sin z}{z}-\sqrt{\pi} \sum_{k=0}^\infty(-\tfrac14)^k \frac{1}{\Gamma(k+\tfrac12)(k+\tfrac12)}
   \frac{z^{2k}}{k!} \\
  =& \frac{\sin z}{z} - \sum_{k=0}^\infty(-\tfrac14)^k \frac{2^{2k+1}z^{2k}}{\Gamma(2k+2)}
   = -\frac{\sin z}{z}.
\end{align*}

\vspace*{-0.8cm}\qed
\end{proof}

\begin{proposition}
  Identity~(10.2.34) follows from~(10.1.32) and (10.2.33).
  It holds for $z\in\set{C}\setminus\set{R}_{\leq 0}$.
\end{proposition}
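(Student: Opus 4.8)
The plan is to derive (10.2.34) from the connection formula (10.1.32), which expresses the modified Bessel function $K_\nu$ through $I_\nu$ and $I_{-\nu}$ --- the modified-Bessel analogue of the $J$--$Y$ relation (9.1.65) used for (10.1.42)--(10.1.44) --- namely $K_\nu(z)=\tfrac{\pi}{2}\bigl(I_{-\nu}(z)-I_\nu(z)\bigr)/\sin(\nu\pi)$, combined with the order-derivative identity (10.2.33) and its companion (10.2.32) already established in Proposition~\ref{pr:10.2.32}. Because these inputs hold on $\set{C}\setminus\set{R}_{\le 0}$, the target domain will be inherited automatically.

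First I would differentiate (10.1.32) with respect to the order $\nu$. The crucial structural point is that at the half-integers $\nu=\pm\tfrac12$ we have $\sin(\nu\pi)=\pm1\neq 0$, so --- unlike at the integer orders, where (10.1.32) has merely a removable singularity --- the quotient is smooth and can be differentiated termwise without any limiting argument. Rewriting the term produced by $\der\nu(1/\sin(\nu\pi))$ via $I_{-\nu}-I_\nu=\tfrac{2}{\pi}\sin(\nu\pi)K_\nu$ yields $\der\nu K_\nu=\tfrac{\pi}{2\sin(\nu\pi)}\bigl(\der\nu I_{-\nu}-\der\nu I_\nu\bigr)-\pi\cot(\nu\pi)\,K_\nu$. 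Since $\cot(\pm\tfrac\pi2)=0$, the last summand drops out at $\nu=\pm\tfrac12$ --- exactly the vanishing-cotangent phenomenon exploited for (10.1.42)--(10.1.44).

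It then remains to evaluate the two order-derivatives of $I$. By the chain rule $\der\nu I_{-\nu}=-[\der\mu I_\mu]_{\mu=-\nu}$, so at $\nu=\tfrac12$ the right-hand side equals $\tfrac{\pi}{2}\bigl(-[\der\mu I_\mu]_{\mu=-1/2}-[\der\mu I_\mu]_{\mu=1/2}\bigr)$, and substituting (10.2.33) and (10.2.32) makes the two $\Ei(2z)\e^{-z}$ contributions cancel while the $\mathrm{E}_1(2z)\e^{z}$ terms add, leaving $[\der\nu K_\nu]_{\nu=1/2}=\sqrt{\pi/(2z)}\,\mathrm{E}_1(2z)\e^{z}$ after using $\pi/\sqrt{2\pi z}=\sqrt{\pi/(2z)}$. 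The opposite-sign case $\nu=-\tfrac12$ then needs no recomputation: since $K_\nu=K_{-\nu}$, the order-derivative $\der\nu K_\nu$ is odd in $\nu$, so $[\der\nu K_\nu]_{\nu=-1/2}=-[\der\nu K_\nu]_{\nu=1/2}$, which is precisely the $\pm$ in (10.2.34). I expect no genuine obstacle here: the argument is elementary once (10.1.32) is differentiated, and the only places demanding care are the chain-rule sign in $\der\nu I_{-\nu}$ and verifying that the $\Ei$ terms cancel exactly; the difficulty that normally attends differentiating the $K$--$I$ relation, namely its $0/0$ behaviour at integer orders, simply does not arise at half-integers.
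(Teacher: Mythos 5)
Your proposal is correct and follows essentially the same route as the paper: the paper simply quotes the differentiated connection formula (A\&S 9.6.43), whose cotangent term vanishes at $\nu=\pm\tfrac12$, applies the chain rule to $\der\nu I_{-\nu}$, and lets the $\Ei(2z)\e^{-z}$ contributions from (10.2.32) and (10.2.33) cancel --- exactly your computation, except that you re-derive 9.6.43 by differentiating the $K$--$I$ relation yourself and dispatch the $\nu=-\tfrac12$ case via the oddness of $\der\nu K_\nu$ rather than repeating the sign-tracking. Both of those small variations are sound, so no changes are needed.
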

\begin{proof}
  Indeed, by (9.6.43) we have
  \begin{align*}
    \left[\frac{\partial}{\partial \nu}K_\nu(z)\right]_{\nu=\pm 1/2} &=
    \frac{\pi}{2}\csc(\nu \pi)\left[ \frac{\partial}{\partial \nu}I_{-\nu}(z)
     - \frac{\partial}{\partial \nu}I_\nu(z)\right]_{\nu=\pm 1/2} \\
     &= - \frac{\pi}{2}\csc(\nu \pi)\left(\left[\frac{\partial}{\partial \nu}I_\nu(z)\right]_{\nu=\mp 1/2}
     + \left[\frac{\partial}{\partial \nu}I_\nu(z)\right]_{\nu=\pm 1/2} \right) \\
     &= \pm \sqrt{\frac{\pi}{2 z}} \mathrm{e}^z \mathrm{E}_1(2z).
   \end{align*}

\vspace*{-0.8cm}\qed
\end{proof}

Finally, we note that~(10.2.32), which was proved in Proposition~\ref{pr:10.2.32},
can be proved by hand from~(10.1.41).
Indeed, replacing $z$ with $\i z$ in (9.1.64) makes the $k$-sum
in (9.1.64) equal the $k$-sum in (9.6.42). Solving both relations for
the $k$-sum allows to express $\der{\nu}I_\nu(z)$ by $I_\nu(z)$,
$J_\nu(\i z)$, and $\der{\nu}J_\nu(\i z)$. Plugging in $\nu=\tfrac12$, rewriting $\der{\nu}J_\nu(\i z)$
with~(10.1.41), and using the relations~(5.2.21) and (5.2.23)
between the exponential integral and the sine and cosine integrals
gives~(10.2.32).
Analogously, (10.2.33) follows from~(10.1.42).


\begin{acknowledgement}
This work has been supported by the Austrian Science Fund (FWF) grants P20347-N18, P24880-N25, Y464-N18, DK W1214 (DK6, DK13)
 and SFB F50 (F5004-N15, F5006-N15, F5009-N15), and
by the EU Network LHCPhenoNet PITN-GA-2010-264564.
\end{acknowledgement}

\section*{Appendix: List of Relevant Table Entries}

For the reader's convenience, we collect here all identities from Abramowitz, Stegun~\cite{AbSt73}
that we have used.

\allowdisplaybreaks
\begin{alignat}3
 \tag{5.1.10}
 \Ei(x)&=\gamma+\ln x+\sum_{n=1}^\infty\frac{x^n}{n\,n!}
 &&(x>0)
 \\
 \tag{5.1.11}
 \E1(z)&=-\gamma-\ln z-\sum_{n=1}^\infty\frac{(-1)^nz^n}{n\,n!}
 &&\hspace*{-0.5cm}(|\arg z|<\pi)
 \\
 \tag{5.2.14}
 \Si(x)&=\sum_{n=0}^\infty\frac{(-1)^nx^{2n+1}}{(2n+1)(2n+1)!}
 \\
 \tag{5.2.16}
 \Ci(x)&=\gamma+\log x+\sum_{n=1}^\infty\frac{(-1)^nx^{2n}}{2n(2n)!}
 \\
 \tag{5.2.21}
 \Si(x)&=\frac1{2\i}(\E1(\i z)-\E1(-\i z))+\frac\pi2
 &&\hspace*{-0.5cm}(|\arg z|<\frac\pi2)
 \\
 \tag{5.2.23}
 \Ci(x)&=-\frac12(\E1(\i z)+\E1(-\i z))
 &&\hspace*{-0.5cm}(|\arg z|<\frac\pi2)
 \\
 \tag{6.3.5}
 \psi(z+1)&=\psi(z)+\frac1z
 \\
 \tag{9.1.64}
 \der\nu
 J_\nu(z)&=J_\nu(z)\log(\tfrac12z)-(\tfrac12z)^\nu\sum_{k=0}^\infty(-1)^k\frac{\psi(\nu+k+1)}{\Gamma(\nu+k+1)}\frac{(\tfrac14z^2)^k}{k!}
 \kern-70pt\null
 \\
 \tag{9.1.65}
 \der\nu\ Y_\nu(z)&=\cot(\nu\pi)\Bigl(\der\nu J_\nu(z)-\pi
 Y_\nu(z)\Bigr)\\
 \notag
 &\qquad{}
  -\csc(\nu\pi)\der\nu J_{-\nu}(z)-\pi J_\nu(z)
 &&(\nu\neq0,\pm1,\pm2,\dots)
 \\
 \tag{9.3.1}
 J_\nu(z)&\sim\frac1{\sqrt{2\pi\nu}}\Bigl(\frac{\e z}{2\nu}\Bigr)^{\!\nu},\quad
 Y_\nu(z)\sim-\sqrt{\frac2{\pi\nu}}\Bigl(\frac{\e z}{2\nu}\Bigr)^{\!-\nu}\>\>
 && (\nu\to\infty) 
 \\
 \tag{9.6.10}
 I_\nu(z)&=(\tfrac12z)^\nu\sum_{k=0}^\infty\frac{(\tfrac14z^2)^k}{k!\Gamma(\nu+k+1)}
 \\
 \tag{9.6.42}
 \der\nu I_\nu(z)&=I_\nu(z)\ln(\tfrac12z)-(\tfrac12z)^\nu\sum_{k=0}^\infty\frac{\psi(\nu+k+1)}{\Gamma(\nu+k+1)}\frac{(\tfrac14z^2)^k}{k!}
 \kern-50pt\null
 \\
 \tag{10.1.1}
 j_n(z)&=\sqrt{\tfrac12\pi/z}J_{n+\frac12}(z),
 \quad
 y_n(z)=\sqrt{\tfrac12\pi/z}Y_{n+\frac12}(z)
 \kern-50pt\null
 \\
 \tag{10.1.11}
 j_0(z)&=\frac{\sin z}z,
 \quad
 j_1(z)=\frac{\sin z}{z^2}-\frac{\cos z}z,\\
 \notag
 j_2(z)&=\Bigl(\frac3{z^3}-\frac1z\Bigr)\sin z-\frac3{z^2}\cos z
 \\
 \tag{10.1.12}
 y_0(z)&=-j_{-1}(z)=-\frac{\cos z}z,
 \quad
 y_1(z)=j_{-2}(z)=-\frac{\cos z}{z^2},\kern-50pt\null\\
 \notag
 y_2(z)&=-j_{-3}(z)=\Bigl(\frac1z-\frac3{z^2}\Bigr)\cos
 z-\frac3{z^2}\sin z\\
 \tag{10.1.19}
 j_{n-1}(z)&+j_{n+1}(z)=(2n+1)z^{-1}j_n(z)
 &&(n\in\set Z)
 \\\notag
 y_{n-1}(z)&+y_{n+1}(z)=(2n+1)z^{-1}y_n(z)
 &&(n\in\set Z)
 \\
 \tag{10.1.25}
 j_n(z)&=z^n\Bigl(-\frac1z\der z\Bigr)^{\!n}\,\frac{\sin z}z
 \\
 \tag{10.2.20}
 \frac{n+1}{z}&j_n(z)+\frac{\d}{\d z}j_n(z)=j_{n-1}(z)
\end{alignat}

\bibliographystyle{alpha}

\end{document}